\def \be {\begin{equation}}
\def \ee {\end{equation}}
\def \bea {\begin{eqnarray}}
\def \eea {\end{eqnarray}}
\def \nn {\nonumber}
\def \rr {\raise.35ex\hbox{\small $\prime$}\kern-.17em{\mbox{\large $\imath$}}}
\def \dels {\partial\kern-.6em /\kern.1em}
\def \As {{A\kern-.5em / \kern.5em}}
\def \Ds {D\kern-.7em / \kern.5em}
\def \ks {k\kern-.5em /}
\def \ls {l\kern-.5em /}
\newcommand{\ci}[1]{}
\newcommand{\ba}{\begin{eqnarray}}
\newcommand{\ea}{\end{eqnarray}}
\newcommand{\bal}{\begin{align}}
\newcommand{\eal}{\end{align}}
\newcommand{\bay}[1]{\left(\begin{array}{#1}}
\newcommand{\eay}{\end{array}\right)}
\newcommand{\ket}[1]{|{#1}\rangle}
\newcommand{\bra}[1]{\langle{#1}|}
\newcommand{\hide}[1]{}
\newtheorem{theorem}{Theorem}
\newtheorem{lemma}{Lemma}
\newlist{axioms}{enumerate}{2}
\setlist[axioms,1]{label=\textbf{A\arabic{axiomsi}.}, ref=A\arabic{axiomsi}}
\setlist[axioms,2]{label=\textbf{A\arabic{axiomsi}\rlap{\myEnumCounter{axiomsii}}.},%
                   ref=A\arabic{axiomsi}\myEnumCounter{axiomsii},%
                   align=parleft,%
                   leftmargin=0em,%
                   itemsep=1.4ex,%
                   before={\stepcounter{axiomsi}}}
\begin{document}

\begin{titlepage}

\begin{center}

\hfill
\vskip .2in

\textbf{\LARGE
Bell's Inequality, Generalized Concurrence and Entanglement in Qubits
\vskip.3cm
}

\vskip .5in
{\large
Po-Yao Chang$^a$ \footnote{e-mail address: pychang@pks.mpg.de}, Su-Kuan Chu$^{b,c}$ \footnote{e-mail address: skchu@terpmail.umd.edu}, and Chen-Te Ma$^{d,e,f}$ \footnote{e-mail address: yefgst@gmail.com} 
\\
\vskip 1mm
}
{\sl
$^a$
Center for Materials Theory, Rutgers University, Piscataway, New Jersey 08854, USA.
\\
$^b$
Joint Quantum Institute, NIST/University of Maryland, College Park,\\
 Maryland 20742, USA.
\\
$^c$
Joint Center for Quantum Information and Computer Science,\\
NIST/University of Maryland, College Park, Maryland 20742, USA.
\\
$^d$
School of Physics and Telecommunication Engineering,\\
South China Normal University, Guangzhou 510006, China.
\\
$^e$
The Laboratory for Quantum Gravity and Strings,\\
Department of Mathematics and Applied Mathematics,\\
University of Cape Town,Private Bag, Rondebosch 7700, South Africa.
\\
$^f$
Department of Physics and Center for Theoretical Sciences, \\
National Taiwan University, Taipei 10617, Taiwan, R.O.C..
}\\
\vskip 1mm
\vspace{40pt}
\end{center}
\newpage
\begin{abstract}
It is well known that the maximal violation of the Bell's inequality for a two-qubit system is related
to the entanglement formation in terms of a concurrence.
However, a generalization of this relation to an $n$-qubit state has not been found.
In the paper, we demonstrate some extensions of 
the relation between the upper bound of the Bell's violation
and a {\it generalized} concurrence in several $n$-qubit states.
In particular, we show the upper bound of the Bell's violation can be expressed as a function of the generalized concurrence,
if a state can be expressed in terms of two variables.
We apply the relation to the Wen-Plaquette model and show that the topological entanglement entropy can be extracted from the maximal Bell's violation.
\end{abstract}

\end{titlepage}

\section{Introduction}
\label{1}
Quantum entanglement is an essential concept in quantum systems that has no classical counterpart. 
In the past five decades, many developments including quantum information \cite{Steane:1997kb}, quantum algorithm \cite{Galindo:2001ei} and Bell's inequality \cite{Clauser:1969ny, Bell:1964kc},
provide a deeper understanding of the quantum world. The basic concept of the quantum entanglement 
can be understood from the separability of a quantum state, which states if a state is inseparable, it is called an ``entangled state''.
The Bell state is the simplest example of an entanglement state $|\psi\rangle_{\rm Bell} = \frac{1}{\sqrt{2}}( |0\rangle_A \otimes |1 \rangle_B- |1\rangle_A \otimes |0 \rangle_B) $,
where the $|0(1)\rangle_{A(B)}$ represents a two level state in the subsystem $A (B)$.

Although the definition of an entangled state is clear, how to measure the "entanglement" is rather subtle.
On the one hand,
the most common measure of entanglement is given by the entanglement entropy of a region $A$,
$ S_{\mathrm{EE}, A}=-{\rm Tr } \rho_A \ln \rho_A$ with 
$ \rho_A={\rm Tr_B \rho}$
  being a reduced density matrix of a subsystem $A$ and $\rho$ being a density matrix of a Hilbert space
  $\mathbb{H}=\mathbb{H}_A\otimes\mathbb{H}_B$.
When the entanglement entropy vanishes, the quantum entanglement between complementary subsystems $A$ and $B$ disappears. 
This measure of entanglement only requires the information of the local (reduced) density matrix $\rho_A$, which encodes
its degree of entanglement with the complementary subsystem $B$.
On the other hand, quantum entanglement is also encoded in the correlations between two local measurements on subsystems $A$ and $B$. The famous result of the violation of Bell's inequality \cite{Clauser:1969ny, Bell:1964kc} states that 
the correlations between different measurements of two separated particles of an entangled state must satisfy the inequality under the local realism.
The violation of the constraints or Bell's inequality indicates quantum entanglement in the system, 
which is demonstrated in two-qubit systems both theoretically \cite{Cirelson:1980ry} and experimentally \cite{Hensen:2015ccp, Giustina:2015, Shalm:2015}. 

Although the violation of the Bell's inequality indicates quantum entanglement, how to quantify the "entanglement" from the
Bell's correlators in many-body systems are not clear. 
In a two-qubit system, a relation between entanglement entropy, measured in terms of the concurrence \cite{Bennett:1996gf}, 
and violation of the Bell's inequality is shown\cite{Verstraete:2001} through eigenvalues of an $R$-matrix \cite{Horodecki:1995}. An upper bound of Bell's inequality for some three-qubit quantum states was also studied \cite{Kaiewski:2016}.
These studies show the entanglement, in terms of purity and concurrence, can be captured from the extremal Bell violations in few-body systems.
In this paper, we extended this concept further to $n$-qubit systems.
In particular, 
we consider an $n$-qubit state with two variables $(\lambda_+,\lambda_-)$ such that the concurrence can be expressed as
$C(\psi(\lambda_+,\lambda_-))$. 
We show that if there exists a inverse mapping $(\lambda_+(C(\psi)),\lambda_-(C(\psi)))$, 
the upper bound of the violation of the Bell's inequality measured from an $n$-qubit Bell's operator \cite{Gisin:1998ze, Chang:2017czx}
 can be expressed a function of a {\it generalized} concurrence.
In addition, we show in Theorem 2.1, this upper bound is the maximal violation of the Bell's inequality for a specific $n$-qubit state.

The application of this relation in $n$-qubit systems is tremendous. 
In cutting edge experiments, quantum entanglement can be measured beyond a qubit system but is only accessible for few-body systems. This is because of the measures of entanglement
requires probes with a single-site resolution such as a beam splitter operation in ultracold atoms \cite{Kaufman:2016} or 
the nuclear magnetic resonance (NMR) quantum simulator on molecular \cite{Li2016, Luo:20161, Peng2010, Tseng1999, Luo:20162}.
On the other hand, the measure of Bell's operators is accessible for many-body systems and is demonstrated experimentally for 480 atoms in a Bose-Einstein condensate\cite{Schmied:2016}.
The relation between the upper bound of the maximal violation of the Bell's inequality and a generalized concurrence in $n$-qubit systems 
may provide an alternative measure of entanglement beyond few-body systems. 
The application of this generalization is the essential motivation presented in this paper.

Although we did not find a general relation for a generic $n$-qubit state, the examples that we demonstrated can apply to topological orders. Simple $n$-qubit models of topological orders are the toric code model \cite{Kitaev:1997wr} and the Wen-Plaquette model \cite{Wen:2003yv},  which are constructed from stabilizer operators.
It was shown that the topological entanglement entropy can detect the number of distinct quasi-particles when subregions are contractible \cite{Kitaev:2005dm,Levin:2006zz}. For a non-contractible region, one can also obtain the topological entanglement entropy, but it does not imply a number of distinct quasi-particles \cite{Grover:2013ifa}.
Hence, the entanglement measurement can be a direct probe of detecting topological orders. We show the topological entanglement entropy
can be measured from the maximal violation of the Bell's inequality in a Wen-Plaquette model \cite{Wen:2003yv}.

The structure of this paper is as follows. In Sec. \ref{2}, we study a relation between entanglement entropy and maximal violation of Bell's inequality in the
case that 
each quantum state is a linear combination of two product states.
In Sec. \ref{3}, we demonstrate this relation in an XY model at zero and finite temperature and its application in the Wen-Plaquette model. 
We show the topological entanglement entropy in the Wen-Plaquette model can be computed from
the relation of the maximal violation of the Bell's inequality and the generalized concurrence. 
In Sec. \ref{4}, we demonstrate that this relation is also hold in a certain type of
2$n$-qubit states. Finally, we discuss and conclude in Sec. \ref{5}.

\section{Entanglement and Maximal Violation}
\label{2}
The Bell's operator of $n$ qubits is defined iteratively as that \cite{Gisin:1998ze}
\bea
{\cal B}_n={\cal B}_{n-1}\otimes\frac{1}{2}\bigg(A_n+A_n^{\prime}\bigg)+{\cal B}^{\prime}_{n-1}\otimes\frac{1}{2}\bigg(A_n-A_n^{\prime}\bigg),
\eea
where 
$A_n={\bf a}_n \cdot \boldsymbol{\sigma}$, and  $A^{\prime}_n={\bf a}_n^{\prime} \cdot \boldsymbol{\sigma}$
are the operators in the $n$-th qubit
with ${\bf a}_n$ and ${\bf a}_n^{\prime}$ being unit vectors and 
$\boldsymbol{\sigma}=(\sigma_x,\sigma_y,\sigma_z)$
 being a vector of the Pauli matrices. The $(n-1)$-qubit operators $\frac{1}{2} {\cal B}_{n-1}$ and $\frac{1}{2} {\cal B}_{n-1}^{\prime}$ act on the rest of the qubits.
Noe that we choose: 
\bea
\frac{1}{2}{\cal B}_1={\bf b} \cdot \boldsymbol{\sigma}, \qquad \frac{1}{2}{\cal B}_1^{\prime}={\bf b}^{\prime} \cdot \boldsymbol{\sigma}
\eea
with ${\bf b}$ and ${\bf b}^{\prime}$ being unit vectors.
It is known that for an $n$-qubit system, the upper bound of the expectation value of 
the Bell's operator \cite{Gisin:1998ze}
\bea
\mbox{Tr}(\rho{\cal B}_n)\le 2^{\frac{n+1}{2}}
\eea 
leads to violation of the Bell's inequality \cite{Clauser:1969ny}.

For a given density matrix, the maximal expectation value of a Bell's operator is called the maximal violation of Bell's inequality. 
Here we prove a relation between maximal violation of the Bell's inequality and a concurrence of a pure state or entanglement entropy in an $n$-qubit system \cite{Chang:2017czx} when the all $i$-th operators in the Bell's operator are $A_i$ and $A_i^{\prime}$ for $2\le i< n$
\bea
\tilde{{\cal B}}_n
=&&{\cal B}_1\otimes A_2\otimes A_3\cdots\otimes A_{n-2}\otimes A_{n-1}\otimes\frac{1}{2}\bigg(A_n+A_n^{\prime}\bigg)
\nn\\
&&+{\cal B}^{\prime}_1\otimes A_2^{\prime}\otimes A_3^{\prime}\cdots\otimes A_{n-2}^{\prime}\otimes A_{n-1}^{\prime}
\otimes\frac{1}{2}\bigg(A_n-A_n^{\prime}\bigg).
\label{Eq:mBell}
\eea

To proceed our derivation, we first introduce the generalized $R$-matrix:
\bea
R_{i_1i_2\cdots i_n}\equiv\mbox{Tr}(\rho\sigma_{i_1}\otimes\sigma_{i_2}\otimes\cdots\otimes\sigma_{i_n})\equiv R_{Ii_n},
\label{Eq:RM}
\eea
where $\rho$ is a density matrix, $\sigma_{i_\alpha}$ is the Pauli matrix labeled by $i_\alpha=x,y,z$ with site indices $\alpha=1, 2, \cdots, n$. We express the generalized $R$-matrix as a $3^{n-1} \times 3$ matrix $R_{I i_n}$ with the first index being a multi-index $I=i_1i_2\cdots i_{n-1}$ and the second index being $i_n$. In a two-qubit system, maximal violation of the Bell's inequality
was computed from the $3 \times 3$ matrix $R_{ij}$ \cite{Horodecki:1995}, which is a special case of the generalized $R$-matrix or the two-qubit system $n=2$.
Now we use the generalized $R$-matrix to generalize the maximal violation of the Bell's inequality ($\tilde{{\cal B}}_n$) in an $n$-qubit system \cite{Chang:2017czx}.
\begin{lemma}
\label{me}
The maximal violation of the Bell's inequalities has the following relation:
\bea
\gamma\equiv\max_{\tilde{{\cal B}}_n} {\rm Tr}(\rho{\tilde{\cal B}}_n)\le2\sqrt{u_1^2+u_2^2},
\eea
where $u^2_1$ and $u^2_2$ are the first two largest eigenvalues of the matrix $R^\dagger R$ when a number of qubits larger than two, $n>2$, and 
\bea
\gamma=2\sqrt{u_1^2+u_2^2}
\label{Eq7}
\eea
when a number of qubits is two, $n=2$. Note that the matrix $R^{\dagger}R$ is contracted over the multi-index $I$.
\end{lemma}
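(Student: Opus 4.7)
The plan is to reduce $\mathrm{Tr}(\rho \tilde{\mathcal{B}}_n)$ to a bilinear form in the measurement direction vectors and then bound it by two successive applications of Cauchy--Schwarz. First I would write $A_k = \mathbf{a}_k \cdot \boldsymbol{\sigma}$ and $\tfrac{1}{2}\mathcal{B}_1 = \mathbf{b}\cdot\boldsymbol{\sigma}$ (with primed analogues), and collect the first $n-1$ direction vectors into the tensor products $\mathbf{c} = \mathbf{b} \otimes \mathbf{a}_2 \otimes \cdots \otimes \mathbf{a}_{n-1}$ and $\mathbf{c}' = \mathbf{b}' \otimes \mathbf{a}_2' \otimes \cdots \otimes \mathbf{a}_{n-1}'$ in $\mathbb{R}^{3^{n-1}}$, indexed by the multi-index $I=(i_1,\ldots,i_{n-1})$; since every factor is a unit vector, both $\mathbf{c}$ and $\mathbf{c}'$ have unit norm. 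Substituting into $\tilde{\mathcal{B}}_n$ and using (\ref{Eq:RM}) to contract the Pauli-matrix indices, a direct calculation yields
\[
\mathrm{Tr}(\rho \tilde{\mathcal{B}}_n) = \mathbf{c}^T R \, \mathbf{d} + (\mathbf{c}')^T R \, \mathbf{d}',
\]
where $\mathbf{d} = \mathbf{a}_n + \mathbf{a}_n'$, $\mathbf{d}' = \mathbf{a}_n - \mathbf{a}_n'$, and $R$ is viewed as a $3^{n-1} \times 3$ matrix.

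The decisive geometric input is that $\mathbf{d}$ and $\mathbf{d}'$ are orthogonal with $|\mathbf{d}|^2 + |\mathbf{d}'|^2 = 4$, so I may parameterize $|\mathbf{d}| = 2\cos\theta$, $|\mathbf{d}'| = 2\sin\theta$ with unit vectors $\hat d, \hat d'$ forming an orthonormal pair in $\mathbb{R}^3$. Cauchy--Schwarz over the multi-index $I$, using $\|\mathbf{c}\| = \|\mathbf{c}'\| = 1$, gives
\[
\mathrm{Tr}(\rho \tilde{\mathcal{B}}_n) \le 2\cos\theta \, \|R \hat d\| + 2\sin\theta \, \|R \hat d'\|,
\]
and a second Cauchy--Schwarz on the two-component vectors $(\cos\theta,\sin\theta)$ and $(\|R\hat d\|,\|R\hat d'\|)$ yields $\mathrm{Tr}(\rho \tilde{\mathcal{B}}_n) \le 2\sqrt{\hat d^T R^\dagger R \hat d + (\hat d')^T R^\dagger R \hat d'}$. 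By Ky Fan's variational principle, the supremum of the expression under the square root over orthonormal pairs $(\hat d, \hat d')$ in $\mathbb{R}^3$ equals the sum $u_1^2 + u_2^2$ of the two largest eigenvalues of $R^\dagger R$, producing the claimed bound $2\sqrt{u_1^2 + u_2^2}$.

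For $n = 2$ I would argue equality by reverse-engineering the saturation conditions of each Cauchy--Schwarz step. When $n=2$ the vector $\mathbf{c} = \mathbf{b}$ is an unconstrained unit vector in $\mathbb{R}^3$, so the first step saturates by taking $\mathbf{b}, \mathbf{b}'$ parallel to $R\hat d, R\hat d'$; picking $\hat d, \hat d'$ as the top two eigenvectors of $R^\dagger R$ and tuning $\theta$ so that $(\cos\theta,\sin\theta) \propto (\|R\hat d\|,\|R\hat d'\|)$ closes the second step. The main obstacle, and the reason the bound is generically strict for $n > 2$, is precisely that $\mathbf{c}$ must have product form, which constrains how closely $\mathbf{c}$ can be aligned with a generic vector in $\mathbb{R}^{3^{n-1}}$; I expect the most delicate part of the argument to be checking that these saturation requirements are simultaneously consistent exactly when $n=2$, i.e.\ when the constraint disappears.
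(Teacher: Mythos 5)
Your proposal is correct and follows essentially the same route as the paper's proof: both rewrite $\mathrm{Tr}(\rho\tilde{\mathcal{B}}_n)$ as $\langle\hat{B},R(\hat a+\hat a')\rangle+\langle\hat{B}',R(\hat a-\hat a')\rangle$ with the first $n-1$ directions collected into unit product vectors in $\mathbb{R}^{3^{n-1}}$, parameterize $\hat a\pm\hat a'$ by an orthonormal pair and an angle $\theta$, apply Cauchy--Schwarz, and identify the supremum with the sum of the two largest eigenvalues of $R^\dagger R$, with the product-form constraint on $\hat{B},\hat{B}'$ being exactly the obstruction to equality for $n>2$. Your version merely makes explicit the Ky Fan step and the $n=2$ saturation conditions, which the paper leaves implicit.
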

\begin{proof}
We first introduce two three-dimensional orthonormal vectors $\hat{c}$ and $\hat{c'}$ such that:
\bea
\hat{a}+\hat{a'}=2\hat{c} \cos \theta, \qquad \hat{a}-\hat{a'}=2 \hat{c'} \sin \theta,
\eea
 where $\theta\in\lbrack 0,\frac{1}{2}\pi\rbrack$, through three-dimensional unit vectors $\hat{a}$ and $\hat{a'}$. The maximal violation of the Bell's inequality is defined as 
 \bea
 \gamma\equiv\max_{\tilde{{\cal B}}_n} \mbox{Tr}(\rho\tilde{{\cal B}}_n)
 \eea
with the Bell's operator of $n$-qubit $\tilde{{\cal B}}_n$ defined in \eqref{Eq:mBell}.
By using the generalized $R$-matrix and the unit vectors:
\bea
\hat{B}&\equiv&\hat{B}_I=\hat{B}_{i_1i_2\cdots i_{n-1}}\equiv\hat{a}_{1, i_1}\hat{a}_{2, i_2}\cdots\hat{a}_{n-1, i_{n-1}}, 
\nn\\
\hat{B'}&\equiv&\hat{B'}_I=\hat{B'}_{i_1i_2\cdots i_{n-1}}\equiv\hat{a'}_{1, i_1}\hat{a'}_{2, i_2}\cdots\hat{a'}_{n-1, i_{n-1}},
\nn\\
\hat{a}&\equiv&\hat{a}_{n, i_n}, 
\quad
\hat{a'}\equiv\hat{a^{\prime}}_{n, i_n},
\eea 
in which ${\hat{B}}$ and $\hat{B^{\prime}}$ are unit vectors in $3^{n-1}$ dimensions, we have:
\bea
\gamma
&=&\max_{\hat{B},\hat{B'},\hat{a},\hat{a'}}  \bigg(\langle\hat{B}, R (\hat{a}+\hat{a'})\rangle+\langle\hat{B'}, R (\hat{a}-\hat{a'})\rangle\bigg)   
\nn\\
&\le&\max_{\hat{c}, \hat{c'},\theta} \bigg( 2||R \hat{c}|| \cos \theta +2||R \hat{c'}|| \sin \theta\bigg)  
= 2\sqrt{u_1^2+u_2^2},
\eea 
in which $u^2_1$ and $u^2_2$ are the first two largest eigenvalues of the matrix $R^\dagger R$.
The inner product and the norm are defined as:
\bea
\langle P, Q\rangle\equiv P^{\dagger}Q, \qquad ||U||\equiv \sqrt{U^{\dagger}U}.
\eea
 Because $R(\hat{a}+\hat{a^{\prime}})$ and $R(\hat{a}-\hat{a^{\prime}})$ are defined in $3^{n-1}$ dimensions and each unit vector $\hat{B}$ and $\hat{B^{\prime}}$ only contains $2(n-1)$ parameters, there is no guarantee for the below relations:
 \bea
 \hat{B}=k_1R(\hat{a}+\hat{a^{\prime}}), \qquad \hat{B^{\prime}}=k_2R(\hat{a}-\hat{a^{\prime}}),
 \eea
except for $n=2$, where $k_1$ and $k_2$ are two arbitrary constants.
\end{proof}
An earlier approach to relate the maximal violation of the Bell's inequality
and the concurrence of a pure state \cite{Bennett:1996gf}
\bea
C(\psi)\equiv\sqrt{2(1-{\rm Tr} \rho_A^2)}
\label{Eq:C}
\eea 
in a two-qubit system was discussed \cite{Verstraete:2001}.

Now we generalize the relation of the maximal violation of the Bell's inequality
and the concurrence of a pure state in an $n$-qubit system, which still monotonically increases with respect to entanglement entropy, when a quantum state is a linear combination of two product states. The concurrence of the pure state is computed with respect to the bipartition with ($n-1$) qubits in a subsystem $B$ and one qubit in a subsystem $A$.
\begin{theorem}
\label{mec}
We consider an $n$-qubit state 
\bea
|\psi \rangle=|u\rangle_B\otimes\big(\lambda_+|v\rangle_B \otimes |1\rangle_A +\lambda_-|\tilde{v}\rangle_B \otimes |0\rangle_A\big)
\label{Eq:F1}
\eea
with $\lambda_+|v\rangle_B \otimes |1\rangle_A +\lambda_-|\tilde{v}\rangle_B \otimes |0\rangle_A$ being a non-biseparable, $\alpha$-qubit state,
$|u\rangle_B$, $|v\rangle_B$, $|\tilde{v}\rangle_B$ being product states consisting of $\ket{0}$'s and $\ket{1}$'s. The state $|v\rangle_B$ is orthogonal to the state $|\tilde{v}\rangle_B$ by choosing opposite bits on each site. Coefficients $\lambda_+$ and $\lambda_-$ are real numbers and $\lambda_+^2+\lambda_-^2=1$.
The maximal violation of the Bell's inequality in an $n$-qubit system is 
\bea
\gamma=2f_{\alpha}(\psi),
\label{Eq:T1}
\eea
in which the function $f_{\alpha}(\psi)$ is defined as:\\
$(1)$ $\alpha$  is an even number:
\begin{align}
&f_{\alpha}(\psi)\equiv\sqrt{1+2^{\alpha-2}C^2(\psi)},& \qquad    &2^{2-\alpha}\ge C^2(\psi), \nn\\
&f_{\alpha}(\psi)\equiv2^{\frac{\alpha-1}{2}}C(\psi),& \qquad       &2^{2-\alpha}\le C^2(\psi).
\label{Eq:f1}
\end{align}
$(2)$ $\alpha$  is an odd number:
\begin{align}
&f_{\alpha}(\psi)\equiv\sqrt{1+\big(2^{\alpha-2}-1\big)C^2(\psi)},& \qquad  &\frac{1}{1+2^{\alpha-2}}\ge C^2(\psi),\nn\\
&f_{\alpha}(\psi)\equiv2^{\frac{\alpha-1}{2}}C(\psi),& \qquad &\frac{1}{1+2^{\alpha-2}}\le C^2(\psi).
\label{Eq:f2}
\end{align}
Here $C(\psi)$ is the concurrence of the pure state computed with respect to the bipartition that the subsystem $B$ contains $(n-1)$ qubits and the subsystem $A$ contains one qubit. 
\end{theorem}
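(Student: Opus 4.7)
The plan is to apply Lemma \ref{me} to the state in the theorem and then saturate the resulting bound by an explicit construction. The first observation is that $|\psi\rangle = |u\rangle_B \otimes |\phi\rangle$, where $|\phi\rangle = \lambda_+|v\rangle_B\otimes|1\rangle_A + \lambda_-|\tilde{v}\rangle_B\otimes|0\rangle_A$ is the $\alpha$-qubit non-biseparable block. Because $|u\rangle$ is a product of computational basis kets, $\langle u|\sigma_i|u\rangle$ vanishes unless $i=z$ at each of its sites, where it equals $\pm 1$. Hence $R_{i_1\cdots i_n}$ is nonzero only when the $n-\alpha$ indices sitting in the $|u\rangle$ region are all $z$ (with an overall sign), and the spectrum of $R^\dagger R$ reduces to that of $(R^{(\phi)})^\dagger R^{(\phi)}$ on the $\alpha$-qubit block.

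Next I would compute $R^{(\phi)}$ explicitly. Since $|v\rangle,|\tilde{v}\rangle$ differ on every site and $\sigma_z$ preserves the computational basis while $\sigma_x,\sigma_y$ flip it, $R^{(\phi)}$ has support on two disjoint index patterns: the all-$z$ string contributes the scalar $A = -\lambda_+^2(-1)^{|v|_1} + \lambda_-^2(-1)^{|\tilde{v}|_1}$, while the strings in which every index on the $\alpha-1$ sites of the $B$-block as well as the $A$-qubit lies in $\{x,y\}$ contribute entries proportional to $2\lambda_+\lambda_-$ times the real or imaginary part of a product of phases $\langle v_k|\sigma_{i_k}|\tilde{v}_k\rangle$. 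Consequently $R^\dagger R$ is diagonal in the $\{\hat{x},\hat{y},\hat{z}\}$ basis of the $A$-qubit direction. Using $|\tilde{v}|_1 = \alpha-1-|v|_1$, I would get $A^2 = 1$ for $\alpha$ even and $A^2 = (\lambda_+^2-\lambda_-^2)^2 = 1-C^2(\psi)$ for $\alpha$ odd, after noting that $C(\psi) = 2|\lambda_+\lambda_-|$ (which follows directly from $\rho_A = \lambda_+^2|1\rangle\langle 1| + \lambda_-^2|0\rangle\langle 0|$). A short count of the $2^{\alpha-1}$ all-$\{x,y\}$ strings by the parity of their number of $\sigma_y$ factors produces the eigenvalue $2^{\alpha-2}C^2(\psi)$ with multiplicity two. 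Choosing the top two eigenvalues and inserting them into Lemma \ref{me} yields exactly the piecewise $f_\alpha$ formula, with the crossover occurring precisely when $2^{\alpha-2}C^2 = A^2$, reproducing the stated thresholds.

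The step I expect to be the main obstacle is matching the Lemma \ref{me} upper bound by an explicit choice of axes, because the generic left-singular vector of $R^{(\phi)}$ is not of product form when $\alpha > 2$. When the dominant eigenvalue is $A^2$ one can take $\hat{b}=\hat{b}'=\hat{z}$ and $\hat{a}_k=\hat{a}_k'=\hat{z}$ for every $k<n$, which is automatically a product and realises the $A^2$ contribution. For the $2^{\alpha-2}C^2$ eigenvalue I would place each $\hat{a}_k$ on the $\alpha$-qubit block into the $xy$-plane via the ansatz $\hat{a}_k = \frac{1}{\sqrt{2}}\bigl(\hat{x}-(-1)^{v_k}\hat{y}\bigr)$, chosen so that the product $\prod_k \langle v_k|\sigma_{i_k}|\tilde{v}_k\rangle$ collapses to a single pure real or pure imaginary phase aligned with $\hat{a}_n$ or $\hat{a}_n'$ respectively. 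I would then verify directly that combining the all-$z$ configuration and the $xy$-configuration across orthogonal choices of $\hat{c},\hat{c}'$ reproduces the full $2f_\alpha(\psi)$, handling the even and odd $\alpha$ branches separately; if the ansatz fails in some regime, I would fall back on the symmetry of $|\phi\rangle$ under the bit-flipping unitary $\prod_k X_k$ (which exchanges $|v\rangle \leftrightarrow |\tilde{v}\rangle$) to constrain the optimal configuration.
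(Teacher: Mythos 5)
Your proposal is correct and follows essentially the same route as the paper: reduce to the $\alpha$-qubit block (the $|u\rangle$ sites forcing all-$z$ indices), show $R^\dagger R$ is diagonal with eigenvalues $2^{\alpha-2}C^2$ (twice, from counting even- versus odd-parity $\sigma_y$ strings) and $1$ or $1-C^2$ depending on the parity of $\alpha$, then saturate the Lemma~\ref{me} bound with explicit product axes in the $xy$-plane and along $\hat z$. Your derivation of the third eigenvalue via $|\tilde v|_1=\alpha-1-|v|_1$ and the unified crossover condition $2^{\alpha-2}C^2=A^2$ is a slightly cleaner packaging of the same computation, and your flagged concern about realizing the optimal $\hat B,\hat B'$ as product vectors is exactly the point the paper addresses with its ratio conditions.
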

\begin{proof}
The Hilbert space of an $n$-qubit system is bipartitioned as 
$\mathbb{H}=\mathbb{H}_B \otimes \mathbb{H}_A$,
in which dimensions of the sub-Hilbert spaces are
$\dim(\mathbb{H}_A)=2$ and $\dim(\mathbb{H}_B)=2^{n-1}$.
We consider a quantum state with respect to this bipartition 
\bea
|\psi \rangle=|u\rangle_B\otimes(\lambda_+|v\rangle_B \otimes |1\rangle_A +\lambda_-|\tilde{v}\rangle_B \otimes |0\rangle_A),
\eea
where $|u\rangle_B\otimes|v\rangle_B$ and $|u\rangle_B\otimes|\tilde{v}\rangle_B$ are the product states in $\mathbb{H}_B$
and $|1\rangle_A$ and $|0\rangle_A$ are the product states in $\mathbb{H}_A$. By using the properties:
\bea
{\rm Tr}\rho_A=\lambda_+^2 +\lambda_-^2=1, 
\quad
C(\psi)=\sqrt{2(1-\lambda_+^4-\lambda_-^4)},
\eea 
the coefficients $\lambda_{\pm}$ can be expressed in terms of the concurrence
\bea
\lambda_{\pm}^2=\big(1\pm\sqrt{1-C^2(\psi)}\big)/2.
\label{EQ 28}
\eea
The generalized $R$-matrix is:
\bea
R_{Ix}&=&\lambda_+\lambda_- {\rm Tr}  \left [  \sigma_{I_ 1}| u \rangle \langle u |     
 \otimes \sigma_{I_ 2} (|v \rangle \langle \tilde{v}|+|\tilde{v} \rangle \langle v|)  \right],  \notag\\
R_{Iy}&=&-i \lambda_+\lambda_- {\rm Tr}  \left [  \sigma_{I_ 1}| u \rangle \langle u |     
\otimes \sigma_{I_ 2} (|v \rangle \langle \tilde{v}|-|\tilde{v} \rangle \langle v|)  \right],  \notag\\
R_{Iz}&=& -\lambda_+^2{\rm Tr}  \left [ \sigma_{I_ 1}| u \rangle \langle u |     
\otimes \sigma_{I_ 2} |v \rangle \langle {v}|  \right]  +\lambda_-^2{\rm Tr}  \left [  \sigma_{I_ 1}| u \rangle \langle u |     
\otimes \sigma_{I_ 2} |\tilde{v} \rangle \langle \tilde{v}|  \right],
\eea
where $I\equiv I_1I_2$ concatenating two strings of indices,
$I_1\equiv i_1\cdots i_{n-\alpha-1}$ and $I_2\equiv i_{n-\alpha}\cdots i_{n-1}$.
We also define that
$\sigma_{I_1} \equiv \sigma_1\otimes \cdots \otimes \sigma_{i_{n-\alpha-1}}$, 
$\sigma_{I_2} \equiv \sigma_{n-\alpha} \otimes \cdots \otimes \sigma_{i_{n-1}}$.

 Here we choose the basis, 
 $|0\rangle\equiv (1,0)^{\rm T}$ and $|1\rangle\equiv  (0,1)^{\rm T}$.
  One should notice that the dimensions of the matrix $|u\rangle \langle u|$ is  $2^{n-\alpha}$
and the dimensions of the matrices
$|v\rangle \langle v |$,   $|\tilde{v} \rangle \langle \tilde{v}|$, $|v\rangle \langle \tilde{v}|$,  $|\tilde{v}\rangle \langle v |$ 
are $2^{\alpha-1}$. 
The non-vanishing matrix elements of the generalized $R$-matrix $R_{I\alpha}$, $\alpha= x, y, z$, come from the diagonal matrix elements of the below matrices: 
\bea
&&\sigma_{I_1}  |{u}\rangle \langle u| \otimes \sigma_{I_ 2} |{v} \rangle \langle {v} |, \qquad
 \sigma_{I_1}  |{u}\rangle \langle u| \otimes \sigma_{I_ 2} |\tilde{v} \rangle \langle \tilde{v} |, 
 \nn\\
 &&\sigma_{I_1}  |{u}\rangle \langle u| \otimes \sigma_{I_ 2} |\tilde{v} \rangle \langle {v} |, \qquad
 \sigma_{I_1}  |{u}\rangle \langle u| \otimes \sigma_{I_ 2} |{v} \rangle \langle \tilde{v} |.
 \eea
Then the conditions of non-vanishing matrix elements of the generalized $R$-matrix $R_{I\alpha}$, where $\alpha= x, y, z$,
require the followings:
\bea
 \sigma_{I_1}  |{u}\rangle  \to  |{u}\rangle,\qquad
 \sigma_{I_ 2} |\tilde{v} \rangle \to |{v} \rangle, \qquad \sigma_{I_ 2} |{v} \rangle \to |\tilde{v} \rangle 
 \eea
  for the generalized $R$-matrix $R_{Ix(y)}$
and 
\bea
\sigma_{I_ 2} |u \rangle \to |u \rangle, \qquad \sigma_{I_ 2} |{v} \rangle \to |{v} \rangle, \qquad
 \sigma_{I_ 2} |\tilde{v} \rangle \to |\tilde{v} \rangle
 \eea
  for the generalized $R$-matrix $R_{Iz}$.

The conditions of the non-vanishing matrix elements $R_{Ix}$ are
$(n-\alpha)$  number of the $\sigma_z$ matrices in the indices $I_1$, $(\alpha-1-j_1)$ number of the $\sigma_x$ matrices and $j_1$ number of the $\sigma_y$
matrices in the indices $I_2$ with $j_1$ being an even integer.
The conditions of the non-vanishing matrix elements $R_{Iy}$ are
$(n-\alpha)$  number of the $\sigma_z$ matrices in the indices $I_1$, $(\alpha-1-j_2)$ number of the $\sigma_x$ matrices and $j_2$ number of the $\sigma_y$
matrices in indices $I_2$ with $j_2$ being an odd integer.
The conditions of the non-vanishing matrix elements $R_{Iz}$ are
$(n-\alpha)$  number of the $\sigma_z$ matrices in the indices $I_1$, $(\alpha-1)$ number of the $\sigma_z$ 
matrices in the indices $I_2$. 

The above conditions lead to a diagonal form of the matrix $R^\dagger R$.
In the case that $\alpha$ is an even integer, the eigenvalues of the matrix $R^\dagger R$ are:
\bea
&&\big(1+C^{n-1}_2+C^{n-1}_4+\cdots+C^{n-1}_{n-1}\big)C^2(\psi_i)=2^{n-2}C^2(\psi_i), 
\nn\\
 &&\big(1+C^{n-1}_2+C^{n-1}_4+\cdots+C^{n-1}_{n-1}\big)C^2(\psi_i)=2^{n-2}C^2(\psi_i), 
\nn\\
&&1.
\eea
In the case that the integer $\alpha$ is an odd integer, the eigenvalues of the matrix $R^\dagger R$  are:
\bea
&&\big(1+C^{n-1}_2+C^{n-1}_4+\cdots+C^{n-1}_{n-1}\big)C^2(\psi_i)=2^{n-2}C^2(\psi_i), 
\nn\\
 &&\big(1+C^{n-1}_2+C^{n-1}_4+\cdots+C^{n-1}_{n-1}\big)C^2(\psi_i)=2^{n-2}C^2(\psi_i), 
\nn\\
  &&1-C^2(\psi_i).
\eea
We used $C^n_k=C^{n-1}_{k-1}+C^{n-1}_{k-2}$ , which comes from each coefficient of the equation $(1+x)^n=(1+x)^{n-1}(1+x)$, to compute the $xx$-component of the matrix $(R^{\dagger}R)_{xx}$ and the $yy$-component of the matrix $(R^{\dagger}R)_{yy}$.

From the eigenvalues of the matrix $R^{\dagger}R$, we obtain the followings:
\begin{align}
&f_{\alpha}(\psi)\equiv\sqrt{1+2^{\alpha-2}C^2(\psi)},& \qquad    &2^{2-\alpha}\ge C^2(\psi), \nn\\
&f_{\alpha}(\psi)\equiv2^{\frac{\alpha-1}{2}}C(\psi),& \qquad       &2^{2-\alpha}\le C^2(\psi)
\end{align}
when $\alpha$ is an even number,
\begin{align}
&f_{\alpha}(\psi)\equiv\sqrt{1+\big(2^{\alpha-2}-1\big)C^2(\psi)},& \qquad  &\frac{1}{1+2^{\alpha-2}}\ge C^2(\psi),\nn\\
&f_{\alpha}(\psi)\equiv2^{\frac{\alpha-1}{2}}C(\psi),& \qquad &\frac{1}{1+2^{\alpha-2}}\le C^2(\psi)
\end{align}
when $n$ is an odd number and the inequality
\bea
\gamma\le 2f_{\alpha}(\psi).
\eea

Now we want to show that the maximal violation of the Bell's inequality should satisfy the below equations:
\bea
\gamma = \max_{\hat{B},\hat{B'},\hat{a},\hat{a'}}  \langle\hat{B}, R (\hat{a}+\hat{a'})\rangle+\langle\hat{B'}, R (\hat{a}-\hat{a'})\rangle   
= 2\sqrt{u_1^2+u_2^2},
\eea 
in which $u^2_1$ and $u^2_2$ are the first two largest eigenvalues of the matrix $R^\dagger R$ and the variables are defined as the followings:
\bea
\hat{B} & \equiv & \hat{a}_{1, i_1}\hat{a}_{2, i_2}\cdots\hat{a}_{n-1, i_{n-1}},
\quad
\hat{B'} \equiv \hat{a'}_{1, i_1}\hat{a'}_{2, i_2}\cdots\hat{a'}_{n-1, i_{n-1}}, 
\nn\\
\hat{a}&\equiv&\hat{a}_{n, i_n}
\quad
\hat{a^{\prime}}\equiv\hat{a^{\prime}}_{n, i_n},
\eea
 where 
 \bea
 \hat{a}_{n, i_n}+\hat{a^{\prime}}_{n ,i_n} \equiv 2\hat{c}_{n, i_n}\cos\theta, 
\quad
  \hat{a}_{n, i_n}-\hat{a^{\prime}}_{n ,i_n} \equiv 2\hat{c^{\prime}}_{n, i_n}\sin\theta, 
 \eea 
 with $\theta\in\lbrack 0, \pi/2\rbrack$.
This equality holds when the below relations are satisfied:
\bea
\hat{B}=k_1R(\hat{a}+\hat{a^{\prime}}), \qquad \hat{B'}=k_2R(\hat{a}-\hat{a^{\prime}}),
\eea
where $k_1$ and $k_2$ are constants.
One natural choice of $a_{\alpha,  i_{\alpha}}$ and $a_{\alpha,^\prime  i_{\alpha}}$ can be obtained by equating two ratios:
\bea
\bigg|\frac{R_{Ix}(\hat{a}_x + \hat{a^{\prime}}_x)}{R_{I'y}(\hat{a}_y+\hat{a^{\prime}}_y)}\bigg|=\bigg|\frac{B^{}_I}{B^{}_{I'}}\bigg|, 
\quad
\bigg|\frac{R_{Ix}(\hat{a}_x-\hat{a^{\prime}}_x)}{R_{I'y}(\hat{a}_y-\hat{a^{\prime}}_y)}\bigg|=\bigg|\frac{B^{\prime}_I}{B^{\prime}_{I'}}\bigg|,
\eea 
where the multi-index $I$ and the multi-index $I'$ are chosen in the way that one site of the index $I_2$ in the multi-index $I$ is labeled by $x$ and in the multi-index $I'$ is labeled by $y$, and other sites of the indices $I_2$ in the multi-index $I$ and the multi-index $I^{\prime}$ are labeled by the same symbols.
This leads to the following equations:
\bea
\bigg|\frac{\hat{a}_{I, x}}{\hat{a}_{I^{\prime}, y}}\bigg|=\bigg|\frac{\hat{c}_{n, x}}{\hat{c}_{n, y}}\bigg|, \qquad \bigg|\frac{\hat{a^{\prime}}_{I, x}}{\hat{a^{\prime}}_{I^{\prime}, y}}\bigg|=\bigg|\frac{\hat{c^{\prime}}_{n, x}}{\hat{c^{\prime}}_{n, y}}\bigg|.
\eea
When the eigenvalues are
\bea
u^2_1=(R^{\dagger}R)_{xx}, \qquad u^2_2=(R^{\dagger}R)_{yy},
\eea
 we can choose the followings:
\bea
&&(\hat{c}_{n, x}, \hat{c}_{n, y}, \hat{c}_{n, z})^{\rm T} = \frac{1}{\sqrt{2}}(1, 1, 0), 
\quad
 (\hat{c^{\prime}}_{n, x}, \hat{c^{\prime}}_{n, y}, \hat{c^{\prime}}_{n, z})^{\rm T} = \frac{1}{\sqrt{2}}(1, -1, 0), 
\nn\\
&&\cos(\theta)=\sin(\theta)=\frac{\sqrt{2}}{2},
\eea
 to show that the maximal violation of the Bell's inequality saturates the upper bound
 \bea
 \gamma=2\sqrt{u_1^2+u_2^2}.
 \eea
  For the other case:
  \bea
  u^2_1=(R^{\dagger}R)_{zz}, \qquad u^2_2=(R^{\dagger}R)_{xx},
  \eea
   we can choose the followings:
   \bea
&&(\hat{c}_{n, x}, \hat{c}_{n, y}, \hat{c}_{n, z})^{\rm T} = (0, 0, 1), 
(\hat{c^{\prime}}_{n, x}, \hat{c^{\prime}}_{n, y}, \hat{c^{\prime}}_{n, z})^{\rm T} = \frac{1}{\sqrt{2}}(1, 1, 0),
\nn\\
&&\cos(\theta) = \frac{u_1}{\sqrt{u_1^2+u_2^2}}, 
\quad
\sin(\theta) = \frac{u_2}{\sqrt{u_1^2+u_2^2}}
\eea
 to prove that the maximal violation of the Bell's inequality saturates the upper bound
 \bea
 \gamma=2\sqrt{u_1^2+u_2^2}.
 \eea
 \\
\end{proof}
We prove that the maximal violation of the Bell's inequality ($\tilde{\cal{B}}_n$) is directly related to the concurrence of the pure state when the subsystem $A$ only contains one qubit and the quantum state is a linear combination of two product states. Because the concurrence of the pure state also monotonically increases with respect to entanglement entropy, the maximal violation of the Bell's inequality is also related to entanglement entropy directly.
For the maximally entangled state with the maximal concurrence of the pure state
$ C(\psi)=1$, 
the maximal violation of the Bell's inequality is 
$\gamma=2^{\frac{\alpha+1}{2}}\leq 2^{\frac{n+1}{2}}$
satisfies the upper bound of the Bell's operator in an $n$-qubit system. Although we do not use the most generic form of the Bell's operator, information of the quantum state is already contained in the $n$-th qubit operators. Thus, the computing of the generalized $R$-matrix should give the maximal violation of the Bell's inequality ($\tilde{\cal{B}}_n$) directly when a quantum state is a linear superposition of two product states.

Now we discuss the maximal violation of the Bell's inequality in a mixed state. The mixed state of a density matrix $\rho$ is
\bea
\rho=\sum_i p_i|\psi_i\rangle\langle\psi_i|,
\eea
in which the sum of $p_i$ is one. Because we have that:
\bea
\gamma(\rho)&:=&\max_{{\cal B}_n}\mbox{Tr}\big(\rho{\cal B}_n\big)=\max_{{\cal B}_n}\mbox{Tr}\bigg(\sum_ip_i|\psi_i\rangle\langle \psi_i|{\cal B}_n\bigg)
\nn\\
&=&\max_{{\cal B}_n}\sum_ip_i\langle \psi_i|{\cal B}_n|\psi_i\rangle
\le \sum_i p_i\cdot \max_{{\cal B}_n}\langle\psi_i|{\cal B}_n|\psi_i\rangle,
\label{Eq:mixed}
\eea

We find the upper bound of the maximal violation of the Bell's inequality of the mixed state is the sum
of the maximal violation of the Bell's inequality of the pure state with corresponding weight $p_i$.

Next, we want to relate this upper bound of the maximal violation of the Bell's inequality of a mixed state to the concurrence of a mixed state.
From Eq. (\ref{Eq:mixed}), we have
\begin{align}
\gamma(\rho) \le 2 \sum_i p_i  f_\alpha(C(\psi_i))  ,
\end{align}
where  $f_\alpha(C(\psi_i))$ is defined in Eqs. (\ref{Eq:f1}) (\ref{Eq:f2}).
The concurrence for mixed state is defined as \cite{Wootters:1997id}
\bea
C(\rho)\equiv\min_{p_i, \psi_i}\sum_i p_iC(\psi_i), 
\label{Eq:C_mixed}
\eea
where $\{p_i, \psi_i\}$ is a chosen decomposition of $\rho$ such that minimize the right hand side of Eq. (\ref{Eq:C_mixed}).
When the function $f_{\alpha}$ is a linear function of the concurrence $C(\psi_i)$, $f_\alpha(C(\psi_i)) \propto C(\psi_i)$, we have the equality $\min_{p_i, \psi_i} \sum_{i} p_i f_\alpha(C(\psi_i)) = f_\alpha(\min_{p_i, \psi_i} \sum_i p_iC(\psi_i)) = f_\alpha(C(\rho))$.
The maximal violation of the Bell's inequality of a mixed state is bounded by the function of the concurrence of a mixed state,
$\gamma(\rho) \le 2  f_\alpha (C(\rho))$. On the other hand,  in Eqs. (\ref{Eq:f1}) (\ref{Eq:f2}), we have the other form of the function
$f_\alpha(C(\psi_i) = \sqrt{1+ A C^2 (\psi_i)}$ being a monotonously increasing convex function.
Using the property of the monotonously increasing convex function, we have the inequality
\begin{align}
\min_{p_i, \psi_i} \sum_i p_i f_\alpha(C(\psi_i)) \geqslant   f_\alpha(  \min_{p_i, \psi_i} \sum_i p_i C(\psi_i))    = f_\alpha (C(\rho)).
\label{Eq:ineq}
\end{align}

It was shown in Ref. \cite{Wootters:1997id} that there exists a decomposition of $\{p_i, \psi_i \}$ such that all the pure states $\psi_i$
have the same concurrence $C(\psi_i)$. This indicates the concurrence of a mixed state
is exactly the same as the concurrence of pure states in this particular configuration,  $C(\rho) =  \min_{p_i, \psi_i} \sum_i p_i C(\psi_i) = C(\psi_i)$.
Thus, the equality holds in Eq. (\ref{Eq:ineq}),
$\min_{p_i, \psi_i} \sum_i p_i f_\alpha(C(\psi_i)) = f_\alpha (C(\rho)).$
Hence, we can conclude that the maximal violation of the Bell's inequality of a mixed state
is bounded by the function of concurrence of a mixed state
\begin{align}
\gamma(\rho)  \le 2 f_\alpha(C({\rho}))
\label{Eq:T1mixed}
\end{align}.

\section{Applications of Theorem 2.1}
\label{3}
We apply our results [Eqs. (\ref{Eq:T1}) and (\ref{Eq:T1mixed})] to two examples.
The first one is an XY-model with a non-uniform magnetic field. As a demonstration, we only consider the two-qubit case at zero and finite temperature.
The former one corresponds to a pure state and the later corresponds to a mixed state. 
The second example is the Wen-Plaquette model. In the case with only four sites, the ground state has the form as Eq. (\ref{Eq:F1})
and can be applied from Theorem 2.1. On the other hand, in the case with six sites, the ground state is not the form as Eq. (\ref{Eq:F1})
and Theorem 2.1 is not applicable. However, we can still relate the maximal violation of the Bell's inequality and
the {\it generalized} concurrence in this case. We then use this relation to extract the topological entanglement entropy $S_{\rm TEE} = \ln 2$ 
in the Wen-Plaquette model.

\subsection{Application in an XY-model}
\subsubsection{Zero Temperature}
We consider the 2-qubit XY-model with the non-uniform magnetic field
\bea
H_{2\mathrm{qubits}}
&=&-\frac{J}{2}(1+\tilde{\gamma})\sigma_x\otimes\sigma_x-\frac{J}{2}(1-\tilde{\gamma})\sigma_y\otimes\sigma_y
\nn\\
&&-B(1+\delta)\sigma_z\otimes I-B(1-\delta)I\otimes\sigma_z,
\eea
 where 
 \bea
 0\le\tilde{\gamma}\le1, \qquad 0\le\delta\le1, \qquad J\ge 0, \qquad B\ge 0.
 \eea
Thus, the Hamiltonian is:
\bea
H_{2\mathrm{qubits}}
 &=&\begin{pmatrix} -2B& 0 &0&-J\tilde{\gamma}
 \\ 0&2B\delta& -J&0
 \\0&-J&-2B\delta&0
 \\-J\tilde{\gamma} &0&0&2B
 \end{pmatrix}.
\eea
The corresponding eigenvectors and eigenvalues are
\bea
 |\Psi_{1,\pm} \rangle&=&\begin{pmatrix} \mp \sqrt{\frac{|\lambda_{1,\pm}| \mp 2B}{2|\lambda_{1,\pm}|}}
 \\ 0
 \\0
 \\ \sqrt{\frac{|\lambda_{1,\pm}| \pm 2B}{2|\lambda_{1,\pm}|}}
 \end{pmatrix}
 = \mp \sqrt{\frac{|\lambda_{1,\pm}| \mp 2B}{2|\lambda_{1,\pm}|}}|00 \rangle 
+ \sqrt{\frac{|\lambda_{1,\pm}|\pm 2B}{2|\lambda_{1,\pm}|}}|11 \rangle,  \nn\\
| \Psi_{2,\pm} \rangle&=&\begin{pmatrix} 0
 \\ \mp \sqrt{\frac{|\lambda_{2,\pm}|\pm 2B\delta}{2|\lambda_{2,\pm}|}}
 \\ \sqrt{\frac{|\lambda_{2,\pm}|\mp 2B\delta}{2|\lambda_{2,\pm}|}}
 \\ 0
 \end{pmatrix}
 =\mp \sqrt{\frac{|\lambda_{2,\pm}|\pm2B\delta}{2|\lambda_{2,\pm}|}} |10\rangle +  \sqrt{\frac{|\lambda_{2,\pm}|\mp 2B\delta}{2|\lambda_{2,\pm}|}} |01 \rangle,
 \nn\\
\eea
 where  $\lambda_{1,\pm} = \pm\sqrt{4B^2+J^2\tilde{\gamma}^2}$, $\lambda_{2,\pm}=\pm\sqrt{J^2+4B^2\delta^2}$, and
$|00\rangle = (1,0,0,0)^{\rm T}$,  $|11\rangle = (0,0,0,1)^{\rm T}$,  $ |10\rangle =(0,1,0,0)^{\rm T}$, and $ |01\rangle =(0,0,1,0)^{\rm T}$.

We first compute the R-matrices from Eq. (\ref{Eq:RM}) for these states which are all diagonal, 
$R(\Psi_{1,\pm}) = {\rm diag} (\mp  \sqrt{\frac{\lambda_{1,\pm}^2-4B^2}{\lambda_{1,\pm}^2}}, \pm   \sqrt{\frac{\lambda_{1,\pm}^2-4B^2}{\lambda_{1,\pm}^2}} , 1)$ 
and $R(\Psi_{2,\pm}) = {\rm diag} (\mp  \sqrt{\frac{\lambda_{2,\pm}^2-4(B\delta)^2}{\lambda_{2,\pm}^2}}, \mp  \sqrt{\frac{\lambda_{2,\pm}^2-4(B\delta)^2}{\lambda_{2,\pm}^2}} , -1)$.
The maximal violation of the Bell's inequality can be computed from the eigenvalues of $R^\dagger R$ and is given in Eq. (\ref{Eq7}),
$\gamma(\Psi_{1,\pm})=2\sqrt{1+ \frac{\lambda_{1,\pm}^2-4B^2}{\lambda_{1,\pm}^2}}$ 
and $ \gamma(\Psi_{2,\pm})=2\sqrt{1+ \frac{\lambda_{2,\pm}^2-4(B\delta)^2}{\lambda_{2,\pm}^2}}$.

We can compute the concurrence direction from the pure states by Eq. (\ref{Eq:C}),
$C(\Psi_{1,\pm}) = \sqrt{ \frac{\lambda_{1,\pm}^2-4B^2}{\lambda_{1,\pm}^2}}$ and  
$C(\Psi_{2,\pm}) = \sqrt{ \frac{\lambda_{2,\pm}^2-4(B \delta)^2}{\lambda_{2,\pm}^2}}$.
We can immediately see that Theorem 2.1 is held, $\gamma(\Psi_{i,\pm}) = 2 \sqrt{1+C(\Psi_{i,\pm}) ^2} $, $i=1,2$.
This relation demonstrates that we can extract the entanglement in terms of concurrence directly from
the maximal violation of the Bell's inequality. 

\subsubsection{Finite Temperature}
To compute the density matrix at finite temperature, we first compute the following:
\bea
\rho_{AB}
&=&\exp\bigg(-\frac{1}{T} H_{2\mathrm{qubits}}\bigg)\sum_{i=1,2; \alpha=\pm}|\Psi_{i,\alpha}\rangle\langle\Psi_{i,\alpha}|\nn\\
&=&\Bigg(\cosh\bigg(\frac{|\lambda_{1,\pm}|}{T}\bigg)+2\frac{B}{|\lambda_{1,\pm}|}\sinh\bigg(\frac{|\lambda_{1,\pm}|}{T}\bigg)\Bigg)
|00\rangle\langle 00|  \nn\\
&&+\Bigg(\cosh\bigg(\frac{|\lambda_{1,\pm}|}{T}\bigg)-2\frac{B}{|\lambda_{1,\pm}|}\sinh\bigg(\frac{|\lambda_{1,\pm}|}{T}\bigg)\Bigg)
|11\rangle\langle 11|\nn\\
&&+\sinh\bigg(\frac{|\lambda_{1,\pm}|}{T}\bigg)\sqrt{1-\frac{4B^2}{\lambda_{1,\pm}^2}}
\Bigg(|00\rangle\langle 11|+ |11\rangle\langle 00|\Bigg)
\nn\\
&&+\Bigg(\cosh\bigg(\frac{|\lambda_{2,\pm}|}{T}\bigg)-2\frac{B \delta}{|\lambda_{2,\pm}|}\sinh\bigg(\frac{|\lambda_{2,\pm}|}{T}\bigg)\Bigg)
|10\rangle\langle 10|  \nn\\
\nn\\
&&+\Bigg(\cosh\bigg(\frac{|\lambda_{2,\pm}|}{T}\bigg)+2\frac{B \delta}{|\lambda_{2,\pm}|}\sinh\bigg(\frac{|\lambda_{2,\pm}|}{T}\bigg)\Bigg)
|01\rangle\langle 01| \nn\\
&&+\sinh\bigg(\frac{|\lambda_{2,\pm}|}{T}\bigg)
\sqrt{1-\frac{4B^2 \delta^2}{\lambda_{2,\pm}^2}}\Bigg(|10\rangle\langle 01|+ |01\rangle\langle 10|\Bigg),
\eea
where $T$ is a temperature. We can use the density matrix at a finite temperature to compute the concurrence of the mixed state. 
We can determine a critical temperature $T_c$ such that the concurrence of the mixed state vanishes.

The concurrence defined for a mixed state is given in Eq. (\ref{Eq:C_mixed}).
It is shown in Ref. \cite{Wootters:1997id} that Eq. (\ref{Eq:C_mixed}) can be determined by
$C(\rho_{AB})= {\rm max} (0,\xi_1-\xi_2-\xi_3-\xi_4)$, where
$\xi_i$ are the eigenvalues,  in decreasing order, of the quantity $\sqrt{\rho_{AB}(\sigma_y\otimes\sigma_y)\rho_{AB}^*(\sigma_y\otimes\sigma_y)}$:
\bea
\bigg\{ \xi_i \bigg\}
=\bigg\{     
&&\sqrt{\cosh^2 \bigg(\frac{|\lambda_{1,\pm}|}{T}\bigg)-\frac{4B^2}{|\lambda_{1,\pm}|^2}\sinh ^2 \bigg(\frac{|\lambda_{1,\pm}|}{T}\bigg)},
\pm
\sinh \bigg(\frac{|\lambda_{1,\pm}|}{T}\bigg) \sqrt{1-\frac{4B^2}{|\lambda_{1,\pm}|^2}},   \nn \\   
&&\sqrt{\cosh^2 \bigg(\frac{|\lambda_{2,\pm}|}{T}\bigg)-\frac{4B^2 \delta^2}{|\lambda_{2,\pm}|^2}\sinh ^2 \bigg(\frac{|\lambda_{2,\pm}|}{T}\bigg)},
\pm
\sinh \bigg(\frac{|\lambda_{2,\pm}|}{T}\bigg) \sqrt{1-\frac{4B^2 \delta^2}{|\lambda_{2,\pm}|^2}} 
\bigg\}. \nn \\
\eea

The concurrence of the mixed state is
\bea
\max\Bigg( 2\sinh \bigg(\frac{|\lambda_{1,\pm}|}{T}\bigg) \sqrt{1-\frac{4B^2}{|\lambda_{1,\pm}|^2}}
-2\sqrt{\cosh^2 \bigg(\frac{|\lambda_{2,\pm}|}{T}\bigg)-\frac{4B^2 \delta^2}{|\lambda_{2,\pm}|^2}\sinh ^2 \bigg(\frac{|\lambda_{2,\pm}|}{T}\bigg)} ,0
\Bigg)
\nn\\
\eea
when
\bea
&&\sqrt{\cosh^2 \bigg(\frac{|\lambda_{1,\pm}|}{T}\bigg)-\frac{4B^2}{|\lambda_{1,\pm}|^2}\sinh ^2 \bigg(\frac{|\lambda_{1,\pm}|}{T}\bigg)}
+
\sinh \bigg(\frac{|\lambda_{1,\pm}|}{T}\bigg) \sqrt{1-\frac{4B^2}{|\lambda_{1,\pm}|^2}}
 \nn\\
 \geq&&
\sqrt{\cosh^2 \bigg(\frac{|\lambda_{2,\pm}|}{T}\bigg)-\frac{4B^2 \delta^2}{|\lambda_{2,\pm}|^2}\sinh ^2 \bigg(\frac{|\lambda_{2,\pm}|}{T}\bigg)}
+
\sinh \bigg(\frac{|\lambda_{2,\pm}|}{T}\bigg) \sqrt{1-\frac{4B^2 \delta^2}{|\lambda_{2,\pm}|^2}} 
\bigg\},\nn\\
\eea
and the concurrence of the mixed state is
\bea
\max\Bigg( 2\sinh \bigg(\frac{|\lambda_{2,\pm}|}{T}\bigg) \sqrt{1-\frac{4B^2 \delta^2}{|\lambda_{2,\pm}|^2}} 
-2\sqrt{\cosh^2 \bigg(\frac{|\lambda_{1,\pm}|}{T}\bigg)-\frac{4B^2}{|\lambda_{1,\pm}|^2}\sinh ^2 \bigg(\frac{|\lambda_{1,\pm}|}{T}\bigg)} ,0
\Bigg)
\nn\\
\label{Eq:57}
\eea
when
\bea
&&\sqrt{\cosh^2 \bigg(\frac{|\lambda_{2,\pm}|}{T}\bigg)-\frac{4B^2 \delta^2}{|\lambda_{2,\pm}|^2}\sinh ^2 \bigg(\frac{|\lambda_{2,\pm}|}{T}\bigg)}+
\sinh \bigg(\frac{|\lambda_{2,\pm}|}{T}\bigg) \sqrt{1-\frac{4B^2 \delta^2}{|\lambda_{2,\pm}|^2}} 
 \nn\\
\ge&&
\sqrt{\cosh^2 \bigg(\frac{|\lambda_{1,\pm}|}{T}\bigg)-\frac{4B^2}{|\lambda_{1,\pm}|^2}\sinh ^2 \bigg(\frac{|\lambda_{1,\pm}|}{T}\bigg)}
+
\sinh \bigg(\frac{|\lambda_{1,\pm}|}{T}\bigg) \sqrt{1-\frac{4B^2}{|\lambda_{1,\pm}|^2}}.\nn\\
\label{Eq:58}
\eea

First, we consider when $B=0$ which $|\lambda_1| = J \tilde{\gamma} < J = |\lambda_2|$
and Eq. (\ref{Eq:58}) hold. Hence the concurrence is given by Eq. (\ref{Eq:57})
\bea
C(\rho_{AB})=\max\Bigg(2 \sinh \bigg( \frac{ |\lambda_{2,\pm}|}{T}\bigg)-2 \cosh   \bigg( \frac{ |\lambda_{1,\pm}|}{T}\bigg) ,0\Bigg),  \nn\\
\eea
The critical temperature $T_c$ can be found  from the equation
\bea  
&&\sinh \bigg( \frac{ J}{T_c}\bigg)=    \cosh   \bigg( \frac{ J \gamma }{T_c}\bigg).
\eea

Next, we consider $\tilde{\gamma}=0$ and $\delta=1$ case.
 In this case, $|\lambda_1|  < |\lambda_2|$  and the concurrence of the mixed state is given by Eq. (\ref{Eq:57})
\bea
C(\rho_{AB})
=\max\Bigg( 2\sinh \bigg(\frac{\sqrt{J^2+4B^2}}{T}\bigg) \sqrt{1-\frac{4B^2 }{J^2+4B^2}}
-2 ,0
\Bigg).
\eea
The critical temperature is
\bea
T_c=\frac{\sqrt{J^2+4B^2 }}{\sinh^{-1}\bigg(\sqrt{1+\frac{4B^2 }{J^2}}\bigg)}.
\eea

We observe that the critical temperature depends on the magnetic field $B$.  
When we take the limit $J\rightarrow\infty$, the critical temperature $T_c$ approaches to infinity. 
This is expected because the thermal state with infinite $J$ is still highly entangled.
 When we take another limit $J\rightarrow 0$, the concurrence of the mixed state approaches zero, 
 since all spins are aligned along the direction of the magnetic field and will not be entangled.
  Thus, the mixed state approaches to a product state for each temperature in the limit.
 One interesting limit is taking the limit $B/J\rightarrow\infty$.  The critical temperature approaches to $J$.
 These limits provide useful and interesting applications to entanglement from the concurrence of the mixed state at a finite temperature. 
 
 According to Eq. (\ref{Eq:T1mixed}), if one measures the extremal Bell's violation, it will be bounded by the function of the concurrence of the mixed states,
\bea
\max_{{\cal B}_n}\mbox{Tr}(\rho{\cal B}_n)\le2f_2(\rho_{AB})=2\sqrt{1+C^2(\rho_{AB})}.
\label{Eq:63}
\eea

\subsection{Applications to the Wen-Plaquette Model}
Next example we considered is the Wen-Plaquette model, which is a two-dimensional spin (qubit) model with 
the following interactions,
\bea
H_{\mathrm{WP}}=\sum_{i} \sigma_x^i\sigma_y^{i+\hat{x}}\sigma_x^{i+\hat{x}+\hat{y}}\sigma_y^{i+\hat{y}}.
\eea
Here qubits live on vertices with a four-spin interaction on each plaquette.
A quantum state of the Hamiltonian is an $n$-qubit quantum state with $n$ being a number of vertices.
First, we apply our Theorem 2.1 to a four-qubit quantum state, with the geometry of the system containing four vertices, 
eight edges, and four faces, in which the Euler number of the torus is zero [see Fig. \ref{F1}(a)],
$\chi= V-E+F=0$
 with $V$, $E$, and $F$ being a number of vertices, edges, and faces, respectively.
There are four degenerate ground states $|G\rangle_{\rm 4-qubit}$: 
\bea
 &&\frac{1}{\sqrt{2}} (|0000\rangle + |1111 \rangle), 
\quad
\frac{1}{\sqrt{2}} (|1010\rangle + |0101 \rangle),  
 \nn\\
 &&\frac{1}{\sqrt{2}} (|0011\rangle - |1100 \rangle), 
\quad
 \frac{1}{\sqrt{2}} (|1001\rangle - |0110 \rangle).
\eea
We define the order of each site in these four-qubit states in the Fig. \ref{F1} (a). 
We first compute the R-matrices from Eq. (\ref{Eq:RM}) for these states which are $27 \times 3$ matrices.
The $R^\dagger R$ of different states have same diagonal form, $R^\dagger R = \rm{diag} (4,4,1)$, which lead to 
the maximal violation of Bell's inequality $\gamma =4\sqrt{2}$ from the Lemma 2.1 (Eq. [\ref{Eq7})].

Next we compute the concurrence directly from these ground state, $C(|G\rangle_{4-\mathrm{qubit}})=1$.
We immediately see that
Theorem 2.1 [ Eq. (\ref{Eq:T1})] still holds. 
The maximal violation of the Bell's inequality 
is $\gamma=2\times 2^{\frac{4-1}{2}}=4\sqrt{2}$.
Here we demonstrate that  the maximal violation of the Bell's inequality can be directly computed from the concurrence
in the 4-qubit Wen-Plaquette model.
\begin{figure}
\begin{centering}
\includegraphics[scale=0.35]{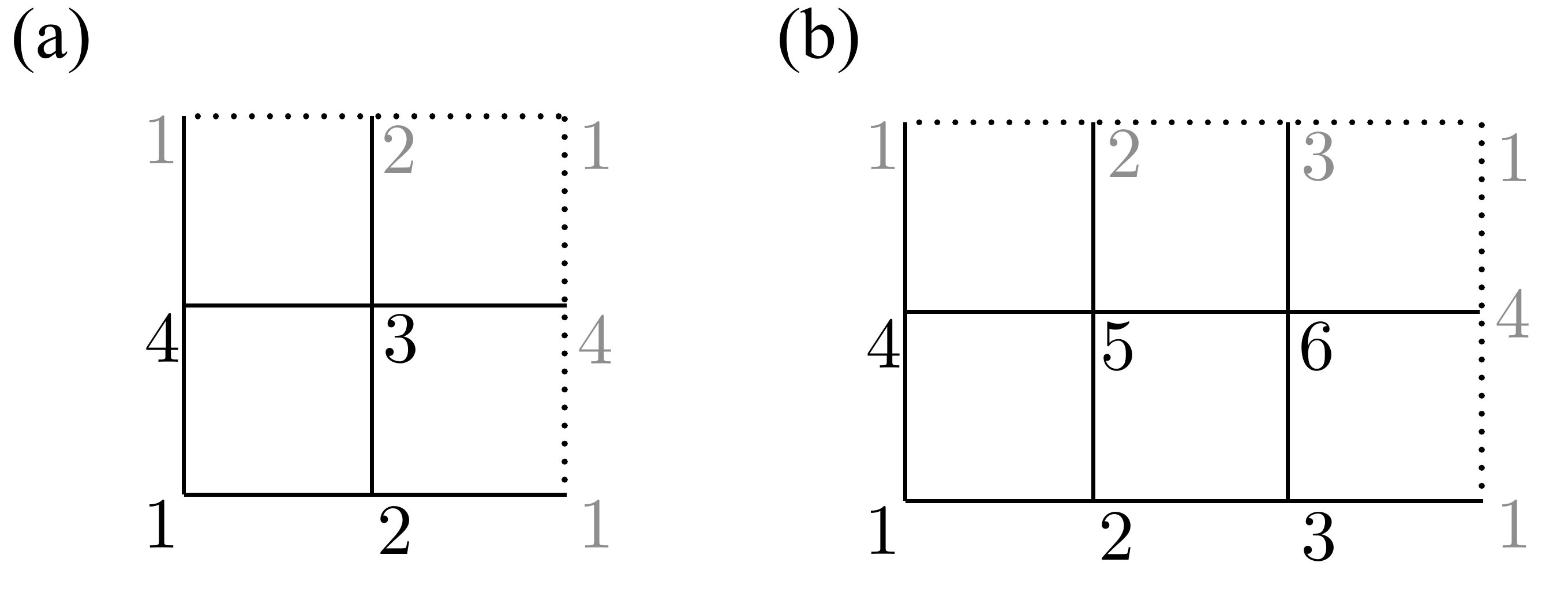}
\par\end{centering}
\caption{(a) A four-qubit quantum state in the Wen-Plaquette model and (b) A six-qubit quantum state of the Wen-Plaquette model on a torus.
  The right dashed line is identified as the left solid line and the top dashed line is identified as the bottom solid line in (a) and (b).
  The numbers are the site indices. The gray colored number is identified with the corresponding black colored number.}
  \label{F1}
\end{figure}

Next, we would like to consider a six-qubit states. The ground state of the Wen-Plaquette model 
are
\bea
|G_1\rangle_{\rm 6-qubit}
&=&\frac{1}{2}\big(-|111000\rangle +|001110\rangle+|100011\rangle+|010101\rangle\big),  
\eea
\bea
|G_2\rangle_{\rm 6-qubit}
=\frac{1}{2}\big(-|000111\rangle +|110001\rangle + |011100\rangle+|101010\rangle\big),  
\eea
with the site labels shown in Fig. \ref{F1} (b). 
Obvious these states do not have the form of Theorem 2.1 and thus 
Theorem 2.1 is not applicable.

To relate the upper bound of the maximal violation of the Bell's inequality and the {\it generalized} concurrence (which we will define later),
we need to rewrite the ground states in terms of two parameters $(\lambda_+,\lambda_-)$. Without loss of generality, we choose
\bea
\label{EQ 123}
&&|\psi_1\rangle_{\rm 6-qubit}
=\frac{\lambda_+}{\sqrt{2}}\big(-|111000\rangle +|001110\rangle \big)
+\frac{\lambda_-}{\sqrt{2}}\big(|100011\rangle+|010101\rangle\big),  
\nn\\
&&|\psi_2\rangle_{\rm 6-qubit}
=\frac{\lambda_+}{\sqrt{2}}\big(-|000111\rangle +|110001\rangle \big)
+\frac{\lambda_-}{\sqrt{2}}\big(|011100\rangle+|101010\rangle\big).
\eea
When $\lambda_+=\lambda_-=\frac{1}{\sqrt{2}}$, the above states are the ground states of the six-site Wen-Plaquette model.

Now we can relate the upper bound of the maximal violation of the Bell's inequality to the generalized concurrence of the pure states given in Eq. (\ref{EQ 123}).
Two different bipartitions are considered: (1) subsystem $A$ contains a site number six, and (2) subsystem $A$ contains a site number five and a site number six.
Here we use $\delta=1$ or $2$ as an indicator for the case one and case two.
According to the Lemma, we find that the upper bound of the maximal violation of the Bell's inequality can be expressed by
the eigenvalues of the $R^\dagger R$ and will be a function of $(\lambda_+, \lambda_-)$.
We then find the inverse mapping of the generalized concurrence such that these two parameters can
be expressed in terms of the generalized concurrence ($C$),  $(\lambda_+(C), \lambda_-(C))$.
Thus, the maximal Bell's violation can be written as a function of the generalized concurrence $\gamma(C)$.
 
One should notice that the  maximal Bell's violation is bipartition independent but measures of entanglement are not,
i.e., the concurrence, the entanglement entropy, or the purity may depend on how we bipartite the system.
Here we define the generalized concurrence as following,
\bea
C(\delta)\equiv\sqrt{2\bigg(1-2^{\delta-1}\mbox{Tr}\rho_{A(\delta)}^2\bigg)},
\label{EQ:GC}
\eea 
which $\delta$ indicates different bipartitions.
In \ref{app1}, we show that the above definition keeps the relation of the concurrence and entanglement entropy be independent on the bipartition $\delta$ in
the toric code model. However, the purity ${\rm Tr} \rho_{A(\delta)}^2$ still depends on the bipartition.
We find that the relation of the generalized concurrence and entanglement entropy is also independent on the bipartitions in the 6-site Wen-Plaquette model as well. Hence the maximal violation of the 
Bell's inequality will be the same functional form of the generalized concurrence as we will demonstrate in the follows.  
 
Without repeating the similar calculation, 
we only give the computation of the quantum state ($|G_1\rangle_{\rm{6-qubit}}$). 
To obtain the upper bound of the maximal violation of the Bell's inequality, we first compute the density matrix
\bea
&&\rho_{{\rm 6-qubit}, 1}
\nn\\
&=&\frac{\lambda_+^2}{2}\big(|111000\rangle\langle 111000| +|001110\rangle\langle 001110|
-|111000\rangle\langle 001110|-|001110\rangle\langle111000|\big)
\nn\\
&&+\frac{\lambda_-^2}{2}\big(|100011\rangle\langle 100011| +|010101\rangle\langle 010101|
+|100011\rangle\langle 010101|+|010101\rangle\langle 100011|\big)
\nn\\
&&+\frac{\lambda_+\lambda_-}{2}
\big(-|111000\rangle\langle 100011|-|111000\rangle\langle 010101|
+|001110\rangle\langle 100011|+|001110\rangle\langle 010101|
\nn\\
&&-|100011\rangle\langle 111000|-|010101\rangle\langle 111000|
+|100011\rangle\langle 001110|+|010101\rangle\langle 001110|\big).
\eea

The non-vanishing matrix elements of the generalized $R$-matrix are:
\bea
&&R_{zzzzzz}=R_{yyzxxz}=R_{xxzyyz}=-1,
\nn\\
&& R_{xxzxxz}=R_{yxzyxz}=R_{xyzxyz}=R_{yyzyyz}=\lambda_+^2-\lambda_-^2
 \nn\\
 &&R_{yxzxyz}=R_{xyzyxz}=1,
\nn\\
&&R_{zyyzxx}=-2\lambda_+\lambda_-,  \qquad R_{zxyzyx}=2\lambda_+\lambda_-
\nn\\
&&R_{yzyxzx}=-2\lambda_+\lambda_-, \qquad R_{xzyyzx}=2\lambda_+\lambda_-
\nn\\
&&R_{zyxzxy}=2\lambda_+\lambda_-, \qquad R_{zxxzyy}=-2\lambda_+\lambda_-, 
\nn\\
&&R_{yzxxzy}=2\lambda_+\lambda_-, \qquad R_{xzxyzy}=-2\lambda_+\lambda_-.
\label{Eq71}
\eea
The eigenvalues of the matrix $R^{\dagger}R$ are $(5+4(\lambda_+^2-\lambda_-^2)^2,  16\lambda_+^2\lambda_-^2,  16\lambda_+^2\lambda_-^2)$.

Now we want to compute the concurrence of these states.
If we denote the last qubit as the region $A$ and the complementary region as the region $B$, 
the concurrence is $C(\lambda_+, \lambda_-) = 2 |\lambda_+\lambda_-|$.
We also have the normalization condition of the state $1= \lambda_+^2 +\lambda_-^2$. 
Therefore, we can invert the function of the concurrence and have
\bea
\lambda_+^2=\frac{1\pm\sqrt{1-C^2(1, \psi)}}{2},\qquad 
 \lambda_-^2=\frac{1\mp\sqrt{1-C^2(1, \psi)}}{2}.
 \label{Eq72}
\eea
Now the eigenvalues of the matrix $R^{\dagger}R$  can be expressed as 
$(9-4C^2(1, \psi),  4C^2(1, \psi),  4C^2(1, \psi))$.
The upper bound of maximal violation of the Bell's inequality from the eigenvalues of the matrix $R^{\dagger}R$ [Eq. (\ref{Eq7})] is 
a function of the concurrence
\begin{align}
\gamma (C) \le
\begin{cases}
6,  \quad   &C^2(1, \psi)\le\frac{9}{8}  \\
2 \sqrt{8C^2(1, \psi)},  \quad    &C^2(1, \psi)>\frac{9}{8}.
\end{cases}
\end{align}

Next, we consider difference bipartition of computing the generalized concurrence defined in Eq. (\ref{EQ:GC}). 
If we denote the last two qubits as the region $A$ and the complementary region as the region $B$, 
the generalized concurrence is $C(2,\psi) = 2 |\lambda_+\lambda_-|$ and its inverse mapping gives the same form as Eq. (\ref{Eq72}).
The above result shows that the upper bound of the maximal violation of Bell's inequality has the same functional form for the generalized concurrence
with different bipartitions. 

One should be noticed that the generalized concurrence we defined is always less or equal then one, $C(\delta, \psi) \le 1$.
This indicates the upper bound of the maximal violation of the Bell's inequality is aways six which is independent on the states and bipartitions.
This independency is purely an artifact from the choice of the Bell's operator [Eq. (\ref{Eq:mBell})].
We can find another Bell's operator to have a better upper bound.
This choice is switching site-1 and site-6 from the original Bell's operator defined in Eq. (\ref{Eq:mBell}). 
We recompute the upper bound of the maximal violation of the Bell's inequality for this choice of the Bell's operator.
The non-vanishing elements of the $R$-matrix in this Bell's operator are switching the first and six indices of the  $R$-matrix
 in Eq. (\ref{Eq71}). The corresponding eigenvalues of the matrix $R^\dagger R$ are $(4, 4, 1+16 \lambda_+^2\lambda_-^2)$.
 The generalized concurrence under redefinition of the site-1 and site-6 are the same as the original one. Hence the  
 eigenvalues of the matrix $R^\dagger R$ are $(4, 4, 1+4 C^2(\delta, \psi))$, $\delta=1,2$.
 
 The upper bound of the maximal violation of the Bell's inequality as a function of the generalized concurrence is
 \begin{align}
\gamma (C) \le
\begin{cases}
2\sqrt{5+4 C^2(\delta,\psi)},  \quad  &C^2(\delta,\psi)\ge\frac{3}{4}  \\
4 \sqrt{2},  \quad   &C^2(\delta,\psi)<\frac{3}{4}, \quad \delta=1,2.
\end{cases}
\end{align}

One can find the ground state of the six-site Wen-Plaquette model ($\lambda_+=\lambda_- = 1/\sqrt{2}$)
has the upper bound of the Bell's inequality equal to $6$ and the concurrence $C=1$. 

Here we give a brief comment on the choice of Bell's operators.
The essential idea of using the upper bound of the maximal Bell's violation as a measure of entanglement is to express the upper bound of the maximal Bell's violation as a function of the concurrence. 
The functional form of the upper bound of the maximal Bell's violation strongly depends on the choice of the Bell's operators. 
If it is a constant for a given region of the concurrence, e.g. $C^2\ge  3/4$ for the first choice of the Bell's operator
and $C^2<3/4$ for the second choice of the Bell's operator, we cannot extract the entanglement information form the upper bound of the maximal Bell's inequality.
This is different from Theorem 2.1 where the maximal violation of the Bell's inequality is always a function of the concurrence
with the particular form of the wavefunction.

Now let us suppose that we choose the first Bell's operator and we know that the generalized concurrence is always greater than $\sqrt{3}/2$.
Can we extract the topological entanglement entropy from measuring the upper bound of the maximal violation of the Bell's inequality?
The answer is yes. First we write down the entanglement entropy $S_{\mathrm{EE}, A(\delta)}=-{\rm Tr} \rho_{A(\delta)} \ln \rho_{A(\delta)} $ for different bipartitions,
 \bea
 S_{\mathrm{EE}, A(1)}=-\lambda_+^2\ln\lambda_+^2-\lambda_-^2\ln\lambda_-^2, 
 \qquad 
 S_{\mathrm{EE}, A(2)}=\ln 2-\lambda_+^2\ln\lambda_+^2-\lambda_-^2\ln\lambda_-^2.
 \nn\\
 \eea

Next we express $\lambda_\pm^2$ in term of the upper bound of the maximal violation of the Bell's inequality $\gamma_M$,  
\bea
\lambda_\pm^2 = \frac{1}{2} \bigg(1\pm\sqrt{\frac{9}{4}-\frac{\gamma_M^2}{16}}\bigg).
\eea
Thus the entanglement entropy can be written as the function fo the concurrence,
\bea
 S_{\mathrm{EE}, A(1)} &=& -\frac{1}{2} \ln \bigg( \frac{\gamma_M^2}{4^3}-\frac{5}{4^2}\bigg)-\frac{1}{2}\sqrt{\frac{9}{4}-\frac{\gamma_M^2}{16}} \ln \bigg( \frac{1+\sqrt{\frac{9}{4}-\frac{\gamma_M^2}{16}}}{1-\sqrt{\frac{9}{4}-\frac{\gamma_M^2}{16}}}\bigg), \nn\\
  S_{\mathrm{EE}, A(2)} &=& S_{\mathrm{EE}, A(1)} + \ln 2.
\eea

For the ground state of the six-site Wen-Plaquette model, $\gamma_M=6$ leads to
$ S_{\mathrm{EE}, A(1)} = \ln 2$ and $ S_{\mathrm{EE}, A(2)}= 2 \ln 2$.

In general, the entanglement entropy has the form 
\bea
S_{\mathrm{EE}, A(L)}= \alpha L  -S_{\rm TEE},
\eea
in which the first term indicates the area law with $L$ being the length of an entangling boundary, $\alpha$ being a constant,
and $S_{\rm TEE}$ is called topological entanglement entropy \cite{Kitaev:2005dm}.
In the Wen-Plaquette model, the length of an entangling boundary $L$ is a number of bonds that connect the subsystem $A$ and the subsystem $B$.
We consider the following number of bonds:
\bea
L(\delta=1)=4, \qquad L(\delta=2)=6
\eea
to extract the area law of entanglement entropy and obtain the topological entanglement entropy:
\bea
S_{\rm TEE}=\ln 2=\ln\sqrt{D},
\eea
 where  $D=4$
  is the number of distinct quasiparticles  \cite{Kitaev:2005dm}. In the case that the subsystem $A$ contains one site corresponds to the four bonds $L=4$. In the case, that subsystem $A$ contains two adjacent sites along a vertical direction corresponds to the four bonds $L=4$. In the case that the subsystem $A$ contains two adjacent sites along a horizontal direction corresponds to the six bonds $L=6$. In the case that the subsystem $A$ contains two disjointed sites corresponds to the six bonds $L=6$. In the case that the subsystem $A$ contains three adjacent sites corresponds to the six bonds $L=6$. In the case that the subsystem $A$ contains two adjacent sites and one disjointed site corresponds to the six bonds $L=6$.
Thus, we use the maximal Bell's violation from the generalized $R$-matrix constructing from the first Bell's operator 
to demonstrate an indirect measure of the topological entanglement entropy.

\section{Generalized Concurrence, Entanglement Entropy and $2n$ Qubits}
\label{4}
The last example we would like demonstrate the relation between the upper bound of the maximal Bell's violation
and the generalized concurrence is a $2n$ qubits with the following form
\bea
|\psi\rangle
&=&\frac{\lambda_+}{\sqrt{2^n}}
\bigg(|00\cdots 00_A\rangle|00\cdots 00_B\rangle
+
|00\cdots010_A\rangle|00\cdots010_B\rangle
\nn\\
&&+\cdots
+ |11\cdots110_A\rangle|11\cdots110_B\rangle\bigg)
\nn\\
&&+\frac{\lambda_-}{\sqrt{2^n}} \bigg(|00\cdots 01_A\rangle|00\cdots 01_B\rangle
+
|00\cdots011_A\rangle|00\cdots011_B\rangle
\nn\\
&&+\cdots
+ |11\cdots111_A\rangle|11\cdots111_B\rangle\bigg),
\eea
in which each region $A$ and region $B$ contains $n$ qubits. For each region, apart from the the last qubit, we consider a linear superposition of all possible configurations in the first $n-1$ qubits.
The generalized concurrence of the pure state is defined as before [Eq. (\ref{EQ:GC})] 

The entanglement entropy of the region $A$ is
\bea
S_{\mathrm{EE}, A}
=(n-1)\ln2-\frac{\lambda_+^2}{2}\ln\bigg(\frac{\lambda_+^2}{2}\bigg)
-\frac{\lambda_-^2}{2}\ln\bigg(\frac{\lambda_-^2}{2}\bigg).
\eea
The coefficients, $\lambda_+$ and $\lambda_-$, for $2n$ qubits also satisfy the following equations:
\bea
\frac{\lambda_+^2}{2} = \frac{1\pm\sqrt{1-C(n, \psi)^2}}{2}, 
\quad
\frac{\lambda_-^2}{2}= \frac{1\mp\sqrt{1-C(n, \psi)^2}}{2}.
\eea
The entanglement entropy of the region $A$ also monotonically increases with respect to concurrence of the pure state $C(n, \psi)$.

Now we first compute and discuss the maximal violation of the Bell's inequality for $n=2$. We use the identities given in \ref{app2} to compute
the non-vanishing elements of the generalized $R$-matrix
\bea
&&R_{xxxx}=R_{yyyy}=R_{yxyx}=R_{zxzx}=\lambda_+\lambda_- , \nn\\
&&R_{xyxy}=R_{zyzy}=-\lambda_+\lambda_-, \quad
R_{xzxz}=R_{zzzz}=-R_{yzyz}=1.
\eea
Thus, the eigenvalues of the matrix $R^{\dagger}R$ are 
$(3C(2,\psi)^2,  3C(2,\psi)^2,  3)$.

Now we can obtain the upper bound of the maximal violation of the Bell's inequality
\bea
\gamma\le2\sqrt{3}\sqrt{1+C(2,\psi)^2}.
\eea

We consider different bipartitions, the quantum state can be rewritten as
\bea
|\psi\rangle
&=&\frac{\lambda_+}{2}\big(|000_A\rangle +|101_A\rangle\big)|0_B\rangle
+\frac{\lambda_-}{2}(\alpha_0-\alpha_1)\big(|010_A\rangle +|111_A\rangle\big)|1_B\rangle.
\eea
the entanglement entropy of the region $A$ is
\bea
S_{\mathrm{EE}, A}
=-\frac{\lambda_+^2}{2}\ln\bigg(\frac{\lambda_+^2}{2}\bigg)
-\frac{\lambda_-^2}{2}\ln\bigg(\frac{\lambda_-^2}{2}\bigg),
\eea
and the upper bound of the maximal violation of the Bell's inequality is
\bea
\gamma\le2\sqrt{3}\sqrt{1+C(1, \psi)^2}.
\eea
The entanglement entropy of the region $A$ also monotonically increases with respect to the concurrence of the pure state $C(1, \psi)$.
Although the bipartition does not affect the maximal violation of the Bell's inequality, we can use the concurrence of the pure state to express the intensity of entanglement entropy or how large of entanglement entropy for the corresponding bipartition from the upper bound of the maximal violation of the Bell's inequality.

This example also shows that even if a quantum state is not just a linear combination of two product states, 
the upper bound of the maximal violation of the Bell's inequality can relate to the value of entanglement entropy.

For a generalization of an arbitrary number of $n$, the upper bound of the maximal violation of the Bell's inequality
\bea
\gamma\le2\sqrt{3^{n-1}}\sqrt{1+C(n, \psi)^2}.
\eea

\section{Discussion and Outlook}
\label{5}
We demonstrated the relations between the upper bound of the maximal violation of the Bell's inequality 
and the generalized concurrence of $n$-qubit pure states in several examples.
In particular, we show the this upper is equal to the maximal Bell's violation for an $n$-qubit state which has the form in Eq. (\ref{Eq:F1}).
We emphasize that the maximal violation of the Bell's inequality is a measure of the Bell's operator in a full system,
while the computation of the generalized concurrence only involves a reduced density matrix in a subsystem.
The relation that we showed in the paper can provide an alternative measure of entanglement, from the inverse mapping of the upper bound of the Bell's operator to the generalized concurrence. 

We applied our results to the various models. We first considered the two-qubit system with the non-uniform magnetic field at a finite temperature. 
We also studied the maximal violation of the Bell's inequality in the Wen-Plaquette model \cite{Wen:2003yv} for four sites and six sites on a torus manifold. Our computation of the generalized $R$-matrix in the Wen-Plaquette model reveals that the ground states are maximally entangled or the generalized concurrence of the pure state is one. We also provided a possible detection of topological entanglement entropy \cite{Kitaev:2005dm} through the upper bound of the maximal violation of the Bell's inequality by using different bipartitions of the concurrence of the pure state.

The studies of the upper bound of the maximal violation of the Bell's inequality can shed the light to understand an alternative detection of entanglement beyond using the reduced density matrix. 
Extracting the information of the reduced density matrix form the local probe requires single-site resolution which is very restricted in experiments and is very hard to apply to many-body systems. 
On the other hand, the measure of the Bell's operator, in principle, does not require single-site resolution and can be performed to a large system.
It was shown that Bell's operators have been measured in a Bose-Einstein condensate with about 480 atoms\cite{Schmied:2016}.

\section*{Acknowledgments}
We would like to thank Ling-Yan Hung and Xueda Wen for their insightful discussion. Po-Yao Chang was supported by the Rutgers Center for Materials Theory. Su-Kuan Chu was supported by the AFOSR, NSF QIS, ARL CDQI, ARO MURI, ARO, and NSF PFC at JQI. Chen-Te Ma was supported by the Post-Doctoral International Exchange Program. Chen-Te Ma would like to thank Nan-Peng Ma for his encouragement. We would like to thank the National Tsing Hua University, Tohoku University, Okinawa Institute of Science and Technology Graduate University, Yukawa Institute for Theoretical Physics at the Kyoto University, Istituto Nazionale Di Fisica Nucleare - Sezione di Napoli at the Università degli Studi di Napoli Federico II, Kadanoff Center for Theoretical Physics at the University of Chicago, Stanford Institute for Theoretical Physics at the Stanford University, Kavli Institute for Theoretical Physics at the University of California, and Israel Institute for Advanced Studies at the Hebrew University of Jerusalem. Discussions during the workshops, ``Novel Quantum States in Condensed Matter 2017'', ``The NCTS workshop on correlated quantum many-body systems: from topology to quantum criticality'', ``String-Math 2018'', ``Strings 2018'', ``New Frontiers in String Theory'', ``Strings and Fields 2018'', ``Order from Chaos'' , ``NCTS Annual Theory Meeting 2018: Particles, Cosmology and Strings'', and ``The 36th Jerusalem Winter School in Theoretical Physics - Recent Progress in Quantum Field / String Theory'' were useful to complete this work.

\appendix

\section{Generalized Concurrence, Entanglement Entropy and the Toric Code Model}
\label{app1}
We compute Rényi entropy \cite{Hamma:2005zz} in a toric code model \cite{Kitaev:1997wr} on a disk manifold, which can have an arbitrary number of holes, and a cylinder manifold \cite{Zhang:2011jd} with boundary conditions. From our result of the Rényi entropy, we can obtain a relation between a generalized concurrence of a pure state and entanglement entropy. For the case of a disk manifold, we find that the Rényi entropy always equals to entanglement entropy for any bipartition. For the case of a cylinder manifold, we consider a non-contractible region. The Rényi entropy can be different from the entanglement entropy \cite{Zhang:2011jd}. The result is interesting because we can find that a choice of a "generalized concurrence" depends on boundary degrees of freedom of a Hilbert space in a toric code model. We first review a toric code model on a torus manifold, then compute the Rényi entropy in a toric code model on a disk manifold and on a cylinder manifold. The generalized concurrence is useful for us to demonstrate a relationship between the maximal violation of the Bell's inequality and entanglement entropy in the $2n$-qubit quantum state.

\subsection{Review of a Toric Code Model on a Torus Manifold}
We consider a lattice, which can be embedded on an arbitrary two-dimensional surface. The simplest case is an $L \times L$ square lattice with a periodic boundary condition which forms a torus. The Hilbert space of each edge ${\cal H}_i$ consists of a spin one-half degree of freedom and the total Hilbert space of a toric code model is a tensor product of the Hilbert space of each edge as the Hilbert space ${\cal H}=\bigotimes_i{\cal H}_i$. The lattice model on a torus has $L^2$ vertices and $L^2$ plaquettes. Since the dimensions of the Hilbert space of each spin is two and the number of links on a square lattice is 2$L^2$, the total dimensions of the Hilbert space are $2^{2L^2}$. The Hamiltonian of the toric code model is
\begin{equation}
	\label{eq:hamil}
	H=-\sum_{v}U_vA_v-\sum_{p}J_pB_p \,
\end{equation}
with $U_v\ge 0$ and $J_p \ge0$, where:
\bea
	A_v&=&\bigotimes_{i\in {\mathrm{star}(v)}}\sigma_i^z=\sigma_{i_1}^z\otimes\sigma_{i_2}^z\otimes\sigma_{i_3}^z\otimes\sigma_{i_4}^z \,,
	\nn\\
	B_p&=&\bigotimes_{j\in \partial p}\sigma_j^x=\sigma_{j_1}^x\otimes\sigma_{j_2}^x\otimes\sigma_{j_3}^x\otimes\sigma_{j_4}^x \,,
\eea
in which the index $i\in \mbox{star}(v)$ runs over all four edges around an vertex $v$ and $j\in \partial p$ goes around four edges on a boundary of a plaquette $p$. We also remind that the operators $A_v$ (vertex operators), and $B_p$ (plaquette operators) act on the Hilbert space ${\cal H}$ rather than acting on some local Hilbert spaces so the operators are trivial operators or the identity operators on the edges, which are not in the vertex $v$ and the plaquette $p$.

To have a complete set of observables on a torus, it turns out that we also need two loop operators of two non-contractible cycles on a torus:
\begin{equation}
Z(C_1)\equiv\bigotimes_{i \in C_1} \sigma_{i_1}^z \,, \qquad X(C_2)=\bigotimes_{j \in C_2} \sigma_{j_1}^x \, ,
\end{equation}
in which $C_1$ and $C_2$ are loops. 

\subsection{A Disk Manifold with Holes}
We compute entanglement entropy in the toric code model on a disk manifold with an arbitrary number of holes \footnote{It is also referred to a surface code model if a manifold is not a torus.}.
 A non-zero number of holes on a disk manifold can increase ground state energy of the toric code model compared to ground state energy of the toric code model on the disk manifold without holes. This could be seen as a generation of anyons on a disk manifold through increasing the number of holes. Note that we pick boundary conditions, from which an upper edge of the disk manifold and
a right edge of the disk manifold are smooth boundary conditions and a lower edge of the disk manifold and the left edge of the disk manifold are
rough boundary conditions. A vertex operator at a smooth boundary only acts on three qubits on the three links meeting at a vertex. A plaquette operator at a rough boundary lacks a qubit so it only acts on three qubits on the three links nearby.

When we do a bipartition, there are $(n_{L,v},n_{L,c},n_{R,v},n_{R,c})$ anyons,
in which the label $L$ stands for a left region in the system and the label $R$ stands for a right region in the system and the label $v$ represents vortex particles and the label $c$ represents charged particles. In our convention, the vortex particles are generated by the $X$ operators acting on links and the charge particles are generated by the $Z$ operators acting on a dual lattice. For the convenience, we remove the vertex
operators and plaquette operators corresponding to a position of
the anyons so that the quantum state is, in fact, a ground state of the Hamiltonian \footnote{Now the quantum states with the presence and the absence of anyons are all ground states of the Hamiltonian, we have a larger ground state degeneracy than a toric code model on a disk manifold without any hole.}. We also find that entanglement entropy of the disk manifold with a number of holes is independent of cutting because we can use vertex or the plaquette operators to deform the $X$ and $Z$ operators, as shown in Fig. \ref{fig:movecharge} and Fig. \ref{fig:movevortex}. The other decomposition method can be implemented by putting boundary conditions on the entangling surface \cite{Hung:2015fla}. We can set the rough boundary condition on the left side of an entangling surface and set the smooth boundary condition on the right side of the entangling surface. A form of the entanglement entropy is not modified.

\begin{figure}
\begin{centering}
\includegraphics[scale=0.2]{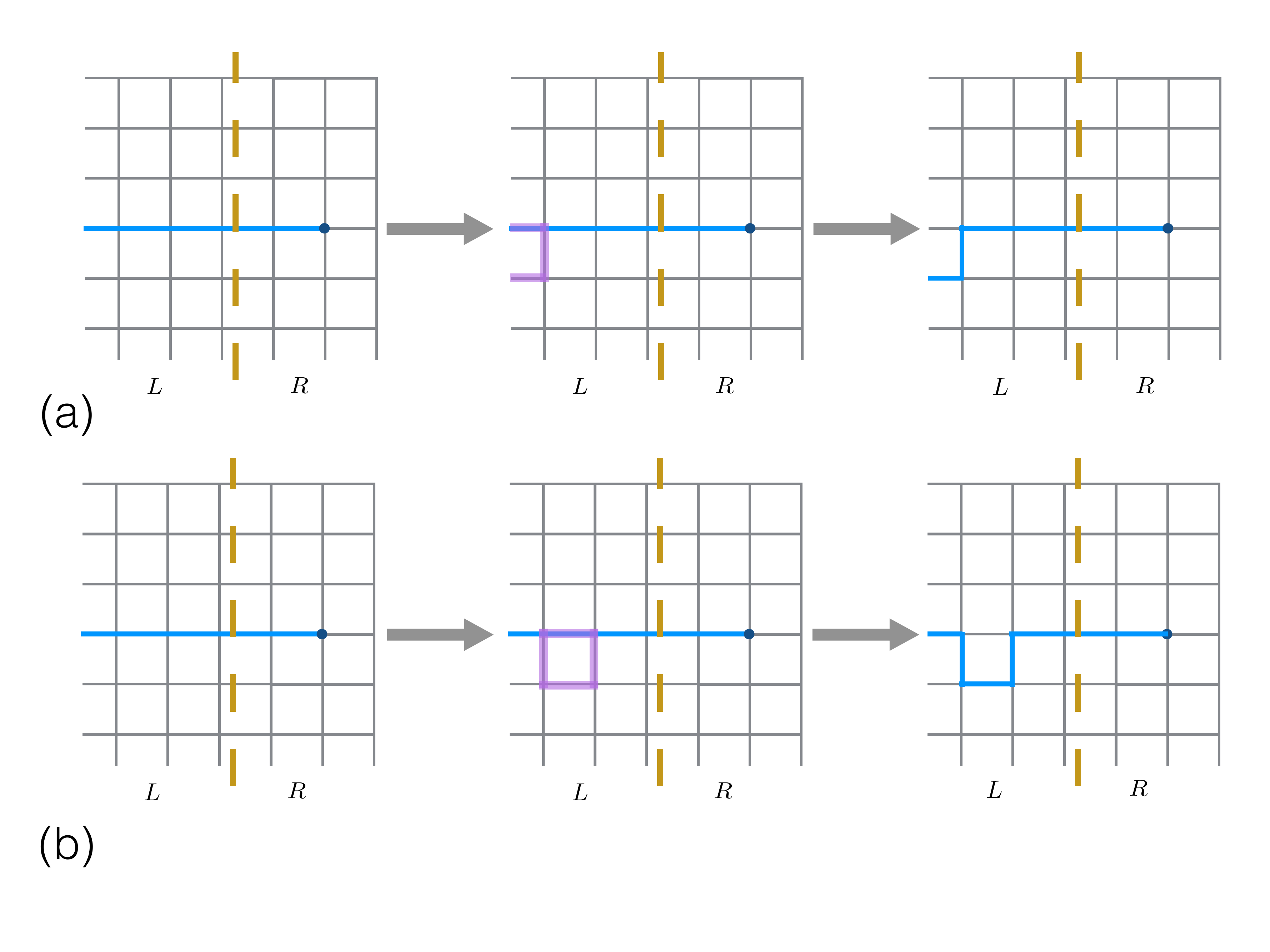}
\par\end{centering}
\caption{We can move the line operator creating charge anyons by applying plaquette operators on the edge as shown in Figure (a) or in the bulk as shown in Figure (b). }
\label{fig:movecharge}
\end{figure}

\begin{figure}
\begin{centering}
\includegraphics[scale=0.2]{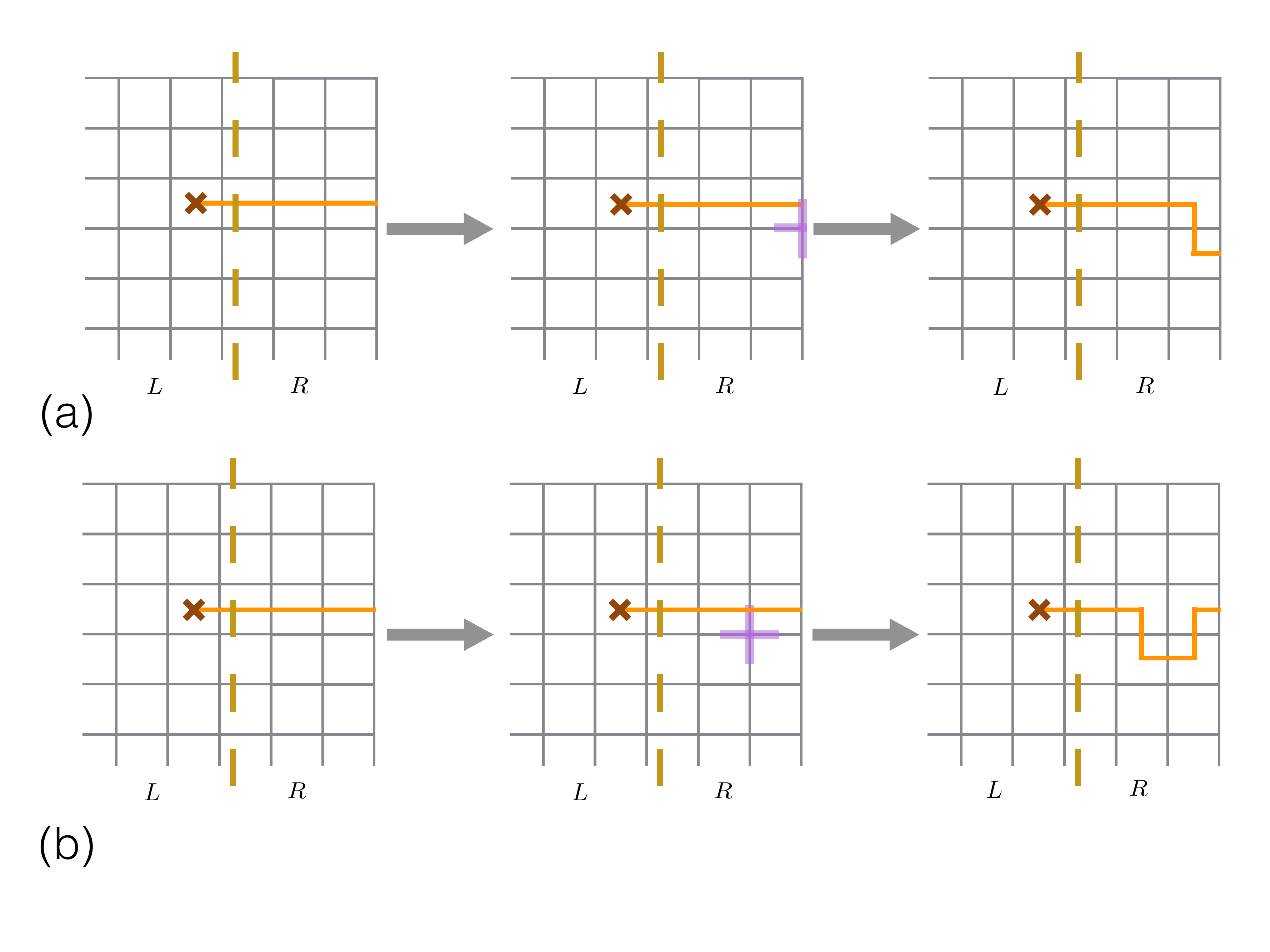}
\par\end{centering}
\caption{We can move the line operator creating vortex anyons by applying vertex operators on the edge as shown in Figure (a) or in the bulk as shown in Figure (b). }
\label{fig:movevortex}
\end{figure}

We define a group $G$ as the group generated by all the plaquette
operators, including those sitting at an edge and those at a position of anyons. Hence, the quantum state $\ket{\Omega}$ or
the ground state of the Hamiltonian without quasiparticles can be
written as the following:
\bea
\ket{\Omega}\equiv\frac{1}{\sqrt{|G|}}\sum_{g\in G}g\ket{0}=\frac{1}{\sqrt{|G|}}\sum_{g\in G}g\ket{0_{L}}\otimes\ket{0_{R}},
\eea
where $|0_L\rangle$ and $|0_R\rangle$ are just shorthand expressions representing qubits live in a left region and a right region.
Therefore, the quantum state $|\psi\rangle$ on a disk manifold with holes can be written as the following:
\begin{align*}
\ket{\psi} & =\prod_{i_{L,v}}X_{i_{L,v}}\prod_{i_{L,c}}Z_{i_{L,c}}\prod_{i_{R,v}}X_{i_{R,v}}\prod_{i_{R,c}}Z_{i_{R,c}}\ket{\Omega}
\\
 & =\frac{1}{\sqrt{|G|}}\prod_{i_{L,v}}X_{i_{L,v}}\prod_{i_{L,c}}Z_{i_{L,c}}\prod_{i_{R,v}}X_{i_{R,v}}\prod_{i_{R,c}}Z_{i_{R,c}}\\
 & \times 
 \sum_{g\in G}g\ket{0_{L}}\otimes\ket{0_{R}}.
\end{align*}

For convenience, we define the operators as the followings:
\bea
E_{L}\equiv\prod_{i_{L,v}}X_{i_{L,v}}\prod_{i_{L,c}}Z_{i_{L,c}}, \qquad E_{R}\equiv\prod_{i_{R,v}}X_{i_{R,v}}\prod_{i_{R,c}}Z_{i_{R,c}}, 
\nn\\
\eea
\bea
E_L^2=1, \qquad E_R^2=1,
\eea
and write the element of the group $G$ as that $g\equiv g_{L}\otimes g_{R}$ and the wavefunction as that
\bea
\ket{\psi}=\frac{1}{\sqrt{|G|}}\sum_{g\in G}E_{L}g_{L}\ket{0_{L}}\otimes E_{R}g_{R}\ket{0_{R}}.
\eea
The operator $g_L$ and the operator $g_R$ do not contain a complete plaquette operators on an entangling surface. The operator $g_L$ only acts on one qubit for each plaquette operator and the action acts on three qubits for each plaquette operator on the entangling surface.

  Each $X$ and $Z$ operator has one ending point attached to an edge and the other ending point fixed
at where the corresponding anyon sits as in Fig. \ref{XZL}.
\begin{figure}
\begin{centering}
\includegraphics[scale=0.2]{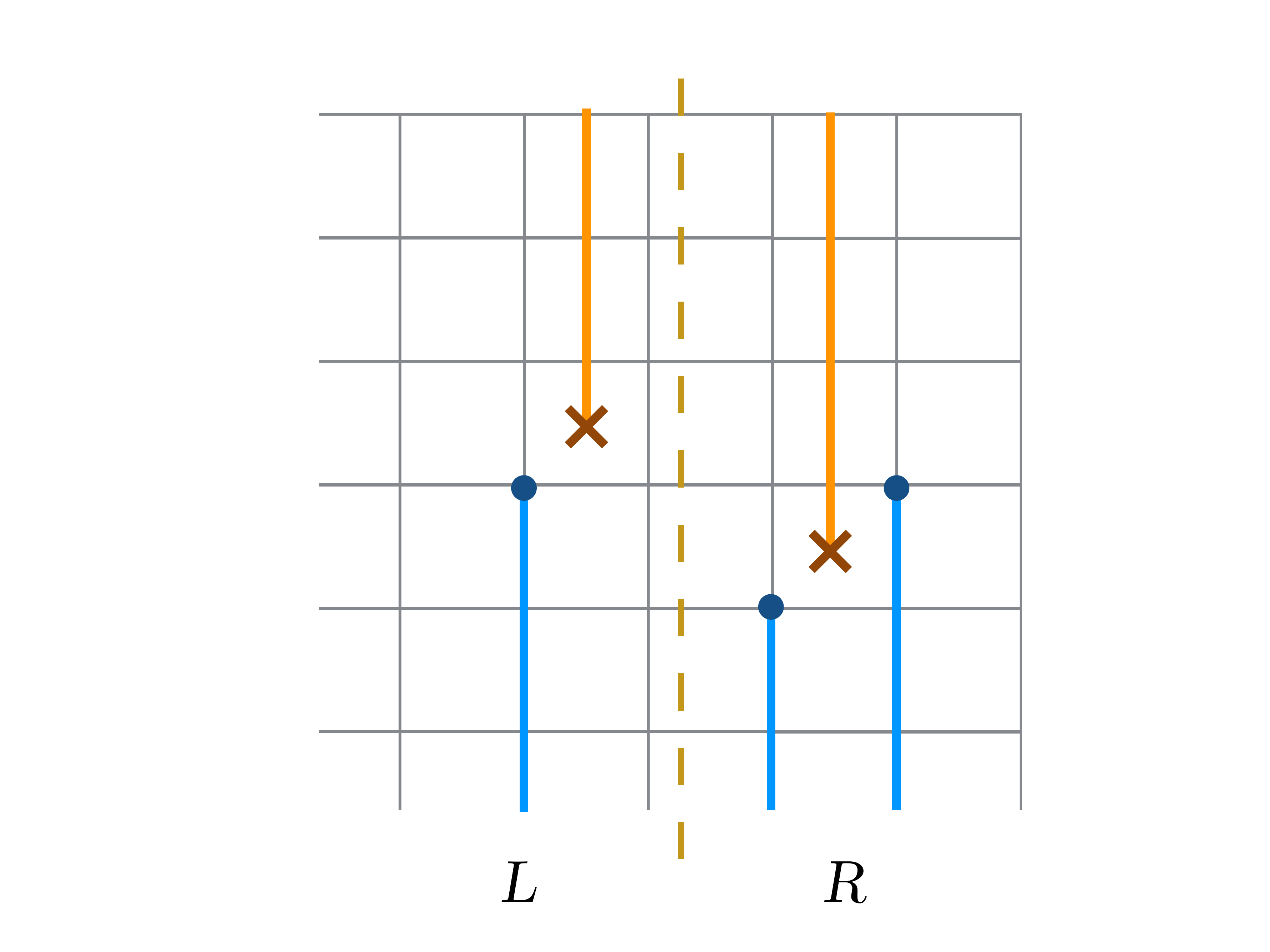}
\par\end{centering}
\caption{Condensation of each type of anyons is generated on the corresponding boundary, costing no energy to produce anyon at the boundary. The anyon in the bulk can be generated by a string operator with one end attached to
a suitable edge. Each blue line consists of a string of the $X$ operators, which
generates vortex particle pairs. The orange lines are strings of the $Z$ operators, which
generates charge particle pairs. Vortex particles condense at the rough boundary while charge particles condensed at the smooth boundary. By using the operation in Fig. \ref{fig:movecharge} and Fig. \ref{fig:movevortex}, we can move any configuration of blue lines and orange lines with a bulk end fixed and boundary ends at their proper boundary to the configuration shown in this Figure. Note that qubits live on links.
}
\label{XZL}
\end{figure}

Now we have the reduced density matrix of the $L$ system (the notations we use here are summarized in the Table. \ref{table:1} and the Table. \ref{table:2}):
\bea
\rho_{L}
&=&\mathrm{Tr}_{R}\ket{\psi}\bra{\psi}
\nn\\
&=&\frac{1}{|G|}\sum_{g,g'\in G}E_{L}{g_{L}}\ket{0_{L}}\bra{0_{L}}{g_{L}}'E_{L}
\bra{0_{R}}{g_{R}}'E_R E_R g_{R}\ket{0_{R}}
\nn\\
&=&\frac{1}{|G|}\sum_{g,g'\in G}E_{L}{g_{L}}\ket{0_{L}}\bra{0_{L}}{g_{L}}'E_{L}\cdot\bra{0_{R}}{g_{R}}'g_{R}\ket{0_{R}}
\nn\\
&=&\frac{1}{|G|}\sum_{g\in G, \tilde{g}\in G} E_{L}{g_{L}}\ket{0_{L}}\bra{0_{L}}g_L\tilde{g}_LE_{L}\cdot \bra{0_{R}}\tilde{g}_{R}\ket{0_{R}}
\nn\\
&=&
\frac{|G_{R}|}{|G|}\sum_{h\in G/G_{R},\tilde{g}\in G_{L}}E_{L}{h_L}\ket{0_{L}}\bra{0_{L}}h_L{\tilde{g}_{L}}E_{L},
\eea
where 
\bea
\tilde{g}_L\equiv g^{\prime}_Lg_L, \qquad \tilde{g}_R\equiv g^{\prime}_Rg_R.
\eea
Because the operator $g_L$ and the operator $g_R$ do not contain complete plaquette operators on the entangling surface, we remind our reader: $g_L\notin G_L$ and $g_R\notin G_R$, in which $G_L$ is a group generated by all the plaquette operators fully supported on a left region and $G_R$ is a group generated by all the plaquette operators fully supported on a right region. We also remind that the operator $\tilde{g}_L$ and the operator $\tilde{g}_R$ do not contain any plaquette operator on an entangling surface.

\begin{table}[h!]
\centering
\begin{tabular}{ |m{7em} | m{7cm}| } 

\hline
\textbf{Notation} & \textbf{Comment} \\ 
\hline
$G_L$ & The group generated by all the plaquette operators
 fully supported on the left region, as illustrated in 
Fig. \ref{fig:illustrategroups} (a). We also remind $g_L\notin G_L$.\\ 
\hline
$G_R$ & The group generated by all the plaquette operators
fully supported on the right region, as illustrated in
Fig. \ref{fig:illustrategroups} (b). We also remind $g_R\notin G_R$.\\ 
\hline
$G$ & The group generated by all the plaquette operators, 
as illustrated in Fig. \ref{fig:illustrategroups} (c).\\
\hline
$G/{G_R}$ & A quotient group.  
Each element is an equivalence class of elements of the group $G$ which have the same action on the left region, as illustrated in Fig. \ref{fig:illustratequotient}\\
\hline
$G/{(G_R \times G_L)}$ & A quotient group.  
The representative of this quotient 
group are generated by the plaquette operators on a 
border. The size of this group is $2^{n_L}$, where $n_L$ is a 
number of plaquette operators on a border. This is 
illustrated by Fig. \ref{fig:boundarygroup}.\\ 
\hline
\end{tabular}
\caption{List of notations of the groups.}
\label{table:1}
\end{table}

\begin{table}[h!]
\centering
\begin{tabular}{ |m{7em} | m{7cm}| } 

\hline
\textbf{Notation} & \textbf{Comment} \\ 
\hline
$g$ & An element of the group $G$. The element of the group
 $G$, the operator $g$ is defined as that $g\equiv g_L\otimes g_R$, in which the operator $g_L$ is supported on the 
left region and the operator $g_R$ is supported on the
 right region. The operator $g$ is related to the operator
$g'$
 by the relation $g'=g\tilde{g}$. We remind that the operator $g_L$ and
  the operator $g_R$ contain non-complete plaquette 
 operators.\\
\hline
$g'$ & An element of the group $G$. The operator $g^{\prime}$ is defined as that
$g'\equiv{g_L}'\otimes {g_R}'$, in which the operator ${g_L}'$ is supported on the left region and 
the 
operator ${g_R}'$ is supported on the right region. The
 operator $g$ is related to the operator $g'$ by the relation $g'=g\tilde{g}$. 
We remind that the operator $g^{\prime}_L$ and the operator 
$g^{\prime}_R$ contain non-complete plaquette operators.\\
\hline
$\tilde{g}$ & An element of the group $G$. The operator $\tilde{g}$ is defined as that $\tilde{g}\equiv{\tilde{g}_L}\otimes {\tilde{g}_R}$, in which 
the operator ${\tilde{g}_L}$ is supported on the left region, while 
the operator ${\tilde{g}_R}$ is supported on the right region. After 
the requirement that the operator is an identity 
operator on the right region ($\tilde{g}_R=\mathbb{I}$), the operator $\tilde{g}$
 becomes an element of $G_L$. \\
\hline
$h$ & An element of the quotient group $G/{G_R}$. We also 
define the operator $h\equiv h_L\otimes h_R$.\\ 
\hline
\end{tabular}
\caption{List of notations of the operators.}
\label{table:2}
\end{table}

\begin{figure}
\begin{centering}
\includegraphics[scale=0.2]{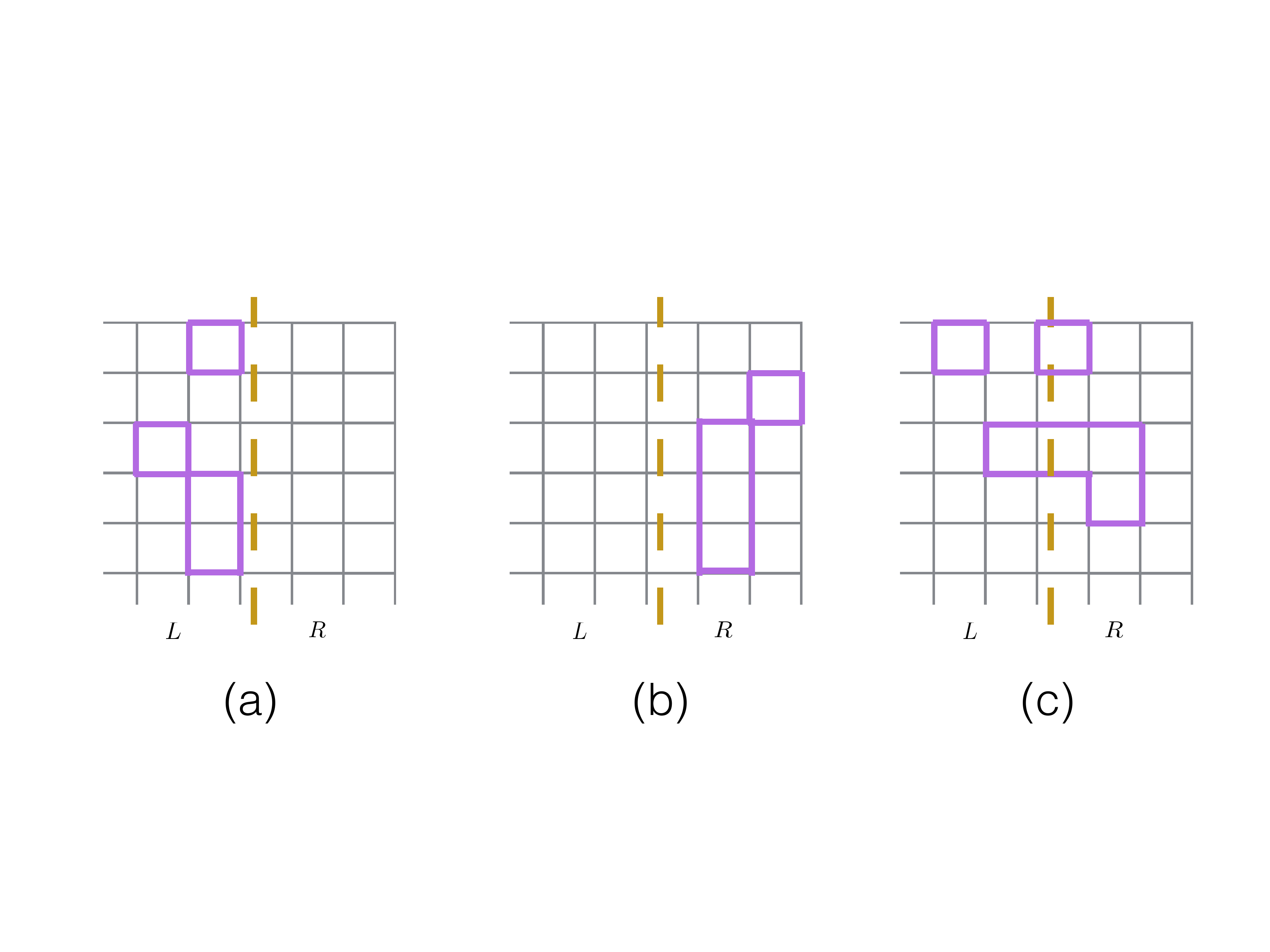}
\par\end{centering}
\caption{The illustration of groups $G$, $G_L$, and $G_R$.(a) The group $G_L$ is generated by plaquette operators supported on the left region. (b) The group $G_R$ is generated by plaquette operators supported on the left region. (c) The group $G$ is generated by plaquette operators on the whole region.
}
\label{fig:illustrategroups}
\end{figure}
\newpage 

\begin{figure}
\begin{centering}
\includegraphics[scale=0.2]{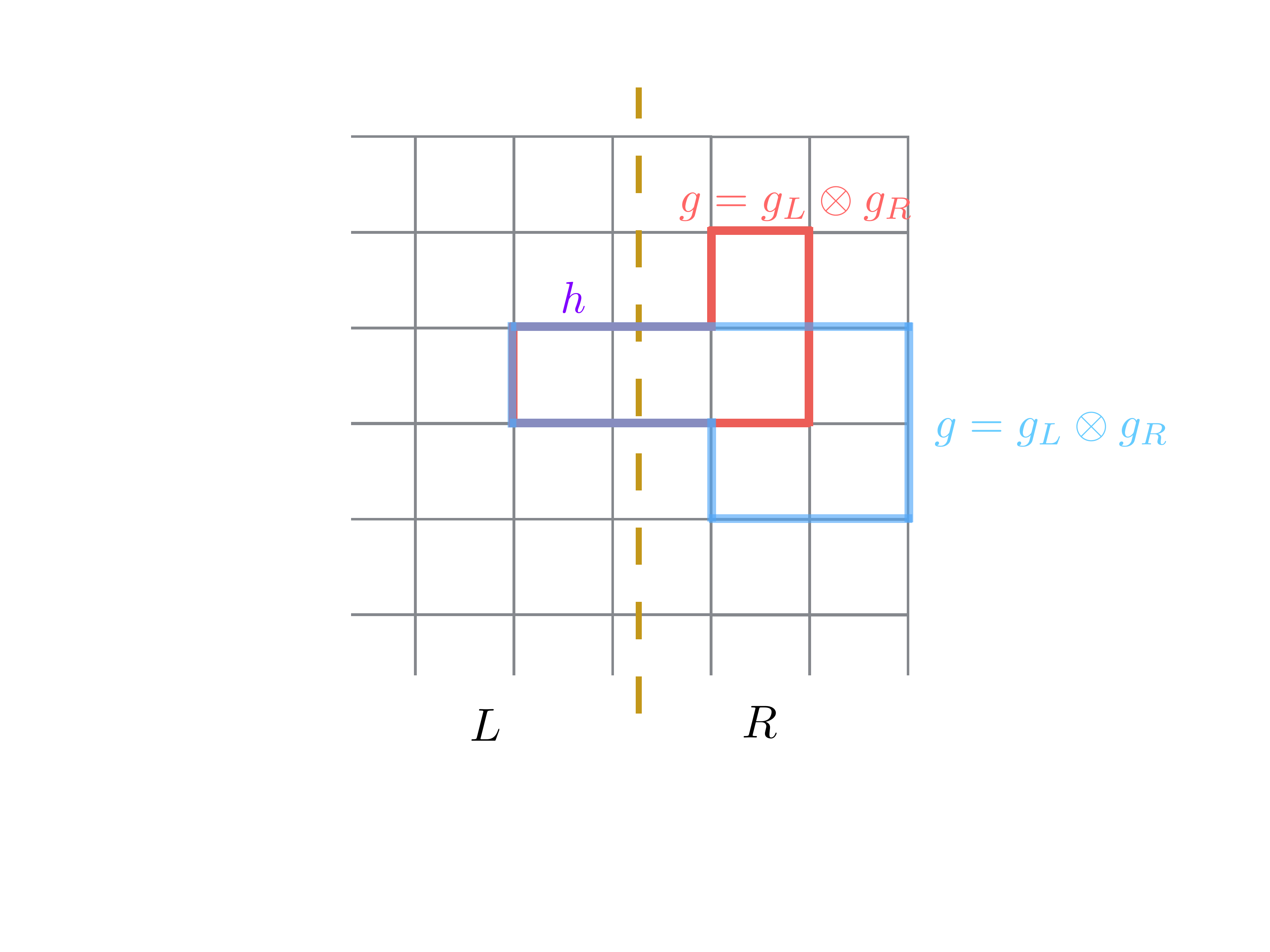}
\par\end{centering}
\caption{The illustration of the quotient group $G/G_R$. Two group elements in the group $G$ colored by red and blue having the same action on the left region are in the same equivalence class $h$ colored by the purple.
}
\label{fig:illustratequotient}
\end{figure}

\begin{figure}
\begin{centering}
\includegraphics[scale=0.2]{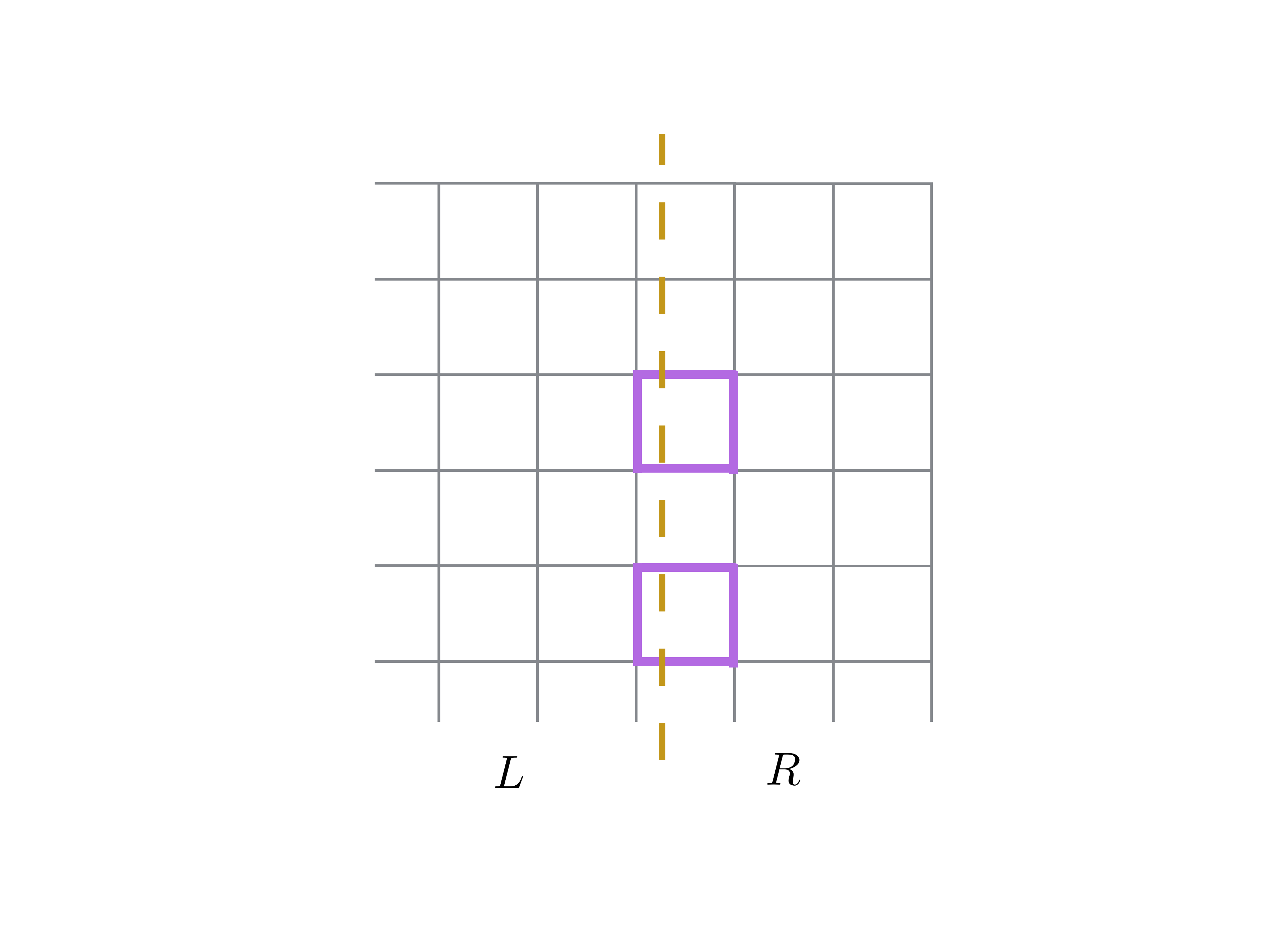}
\par\end{centering}
\caption{The illustration of the quotient group $G/(G_L \times G_R)$. The quotient group is generated by plaquette operators living on the border.
}
\label{fig:boundarygroup}
\end{figure}

We used the relation
\bea
\bra{0_R}\tilde{g}_R\ket{0_R}}={\delta_{\tilde{g}_{R}, \mathbb{I}}
\eea
in the third equality. Then we can obtain:
\bea
\rho_L^2
&=&\frac{|G_R|^2}{|G|^2}
\sum_{h, h'\in G/G_R, \tilde{g},\tilde{g}'\in G_L} 
E_Lh\ket{0_L}
 \bra{0_L}h{\tilde{g}_L}h'\ket{0_L} \bra{0_L}h'{\tilde{g}'_LE_L}
\nn\\
&=&\frac{|G_R|^2}{|G|^2} \sum_{h\in G/G_R, \tilde{g},\tilde{g}'\in G_L} E_Lh\ket{0_L} \bra{0_L}h{\tilde{g}_L}\tilde{g}'_LE_L
\nn\\
&=&\frac{|G_R|^2|G_L|}{|G|^2} \sum_{h\in G/G_R, \tilde{g}\in G_L} 
		E_Lh\ket{0_L} \bra{0_L}h{\tilde{g}_LE_L}
\nn\\
&=&\frac{|G_L||G_R|}{|G|}\rho_{L} 
\nn\\
&\equiv& \lambda \rho_L,
\eea
where
\bea
\lambda\equiv\frac{|G_{L}||G_{R}|}{|G|}=2^{-n_L}.
\eea

We used the relation
\bea
\bra{0_L}h{\tilde{g}_L}h'\ket{0_L}}={\delta_{h',h\tilde{g}_L}
\eea
in the second equality.
Thus, it is also easy to use the same way to obtain the equality
\bea
\rho_{L}^{n}=\lambda^{n}\left(\frac{\rho_{L}}{\lambda}\right).
\eea

The Rényi entropy of the order $\alpha$ in the toric code model on a disk manifold with any number of holes is:
\bea
S_{\alpha}\equiv\frac{1}{1-\alpha}\ln\mbox{Tr}\rho_L^{\alpha}=n_L\ln 2,
\eea
where $n_{L}$ is a number of plaquettes on the entangling surface between the region $L$ and the region $R$. 
In summary, entanglement entropy of the system with an arbitrary number of holes is:
\bea
S_{\mathrm{EE}, L}=S_{\alpha}=n_{L}\ln 2.
\eea
Note that at each hole, we can have two possibilities: presence and absence of an anyon. In this section, we consider the case where anyons are present at each hole. It turns out that all the other cases actually give the same Rényi entropy expression. Hence, the result does not give any constraint to a generalized concurrence.

\subsection{A Cylinder Manifold}
We consider the toric code model on a cylinder manifold, cutting it into two non-contractible sub-cylinders, region $A$ and region $B$. The operators on a boundary of the cylinder manifold satisfy the periodic boundary conditions for the upper and lower sides of the cylinder manifold. We also set the rough (smooth) boundary conditions for other two sides and put boundary conditions on an entangling surface to decompose a region \cite{Hung:2015fla}. When we cut the cylinder manifold into two sub-cylinder manifolds, the operators on an entangling surface between the region $A$ and the region $B$ in the sub-cylinder manifolds satisfy the rough (smooth) boundary condition.

 The toric code model on a cylinder manifold has two ground states \cite{Zhang:2011jd}:
\bea
|\psi_{00}\rangle
&=&\frac{1}{\sqrt{2N_q}}
\sum_{l=1}^{N_q}\big(|q_l=0,0{}_A\rangle|q_l=0,0{}_B\rangle
+|q_l=0,1{}_A\rangle|q_l=0, 1{}_B\rangle\big), 
\nn\\
|\psi_{01}\rangle
&=&\frac{1}{\sqrt{2N_q}}
\sum_{l=1}^{N_q}\big(|q_l=0,0{}_A\rangle|q_l=0,1{}_B\rangle
+|q_l=0,1{}_A\rangle|q_l=0, 0{}_B\rangle\big),
\eea
in which a crossing number to an entangling surface is always an even number, which is labeled by 0 at the first index of $q_l$, a winding number around the sub-cylinders $A$ or $B$ can be an even number, which is labeled by 0 at the second index of $q_l$, or 
an odd number, which is labeled by 0  at the second index of $q_l$, and $N_q=2^{n_L-1}$, where $n_L$ is a number of plaquettes on an entangling surface. A generic ground state is
\bea
|\psi\rangle=\alpha_{00}|\psi_{00}\rangle+\alpha_{01}|\psi_{01}\rangle.
\eea
Thus, the reduced density matrix in the region $A$ is
\bea
\rho_A
&=&\frac{1}{2N_q}
\sum_{l=1}^{N_q}\Bigg(\bigg(|\alpha_{00}|^2+|\alpha_{01}|^2\bigg)
\nn\\
&&\times
\big(|q_l=0,0{}_A\rangle\langle q_l=0,0{}_A|
+|q_l=0,1{}_A\rangle\langle q_l=0,1{}_A|\big)
\nn\\
&&+\bigg(\alpha_{00}^*\alpha_{01}+\alpha_{00}\alpha_{01}^*\bigg)
\nn\\
&&\times
\big(|q_l=0,0{}_A\rangle\langle q_l=0,1{}_A|
+|q_l=0,1{}_A\rangle\langle q_l=0,0{}_A|\big)\Bigg).
\nn\\
\eea
We can diagonalize the reduced density matrix of the region $A$ to obtain the eigenvalues of the reduced density matrix of the region $A$:
\bea
\frac{1}{2N_q}|\alpha_{00}+\alpha_{01}|^2, \qquad \frac{1}{2N_q}|\alpha_{00}-\alpha_{01}|^2,
\eea
and the Rényi entropy of the order $n$ is:
\bea
S_n&=&\frac{1}{1-n}\ln\Bigg(\bigg(\frac{1}{2N_q}\bigg)^nN_q\bigg(\sum_{i=1}^2(2p_i)^n\bigg)\Bigg)
\nn\\
&=&\ln N_q+\frac{1}{1-n}\ln\bigg(\sum_{i=1}^2p^n_i\bigg)
\nn\\
&=&n_L\ln 2-\Bigg(\ln 2-\frac{1}{1-n}\ln\bigg(\sum_{i=1}^2p_i^n\bigg)\Bigg),
\label{EQ 70}
\eea
where: 
\bea
p_1\equiv\frac{1}{2}|\alpha_{00}+\alpha_{01}|^2, \qquad p_2\equiv\frac{1}{2}|\alpha_{00}-\alpha_{01}|^2, 
\quad
 p_1+p_2=1.
\eea
Topological entanglement entropy is defined as a boundary independent part of entanglement entropy and from \eqref{EQ 70}, the topological entanglement entropy is
$\ln 2+\sum_{i=1}^2p_i\ln p_i$. The maximal topological 
entanglement entropy is $\ln2$. When $p_i=1/2$ for each index $i$ we can get the vanishing topological entanglement entropy:
\bea
\ln2-\sum_ip_i\ln p_i&=&\ln2-\frac{1}{2}\ln2-\frac{1}{2}\ln2=\ln2-\ln2=0.
\eea
Since we only have one independent parameter for $p_i$ from the second Renyi entropy in \eqref{EQ 70}, we have the following equations:
\bea
(n_L-1)\ln 2+\ln\mbox{Tr}\rho_A^2=\ln\big(2p_1^2-2p_1+1\big), 
\eea
\bea
2p_1^2-2p_1+\bigg(1-2^{n_L-1}\mbox{Tr}\rho_A^2\bigg)=0,
\eea
\bea
p_1&=&\frac{1\pm\sqrt{1-2\big(1-2^{n_L-1}\mbox{Tr}\rho_A^2}\big)}{2}, 
\nn\\
 p_2&=&\frac{1\mp\sqrt{1-2\big(1-2^{n_L-1}\mbox{Tr}\rho_A^2}\big)}{2}.
\label{EQ 76}
\eea
By comparing \eqref{EQ 28} and \eqref{EQ 76}, we find the entanglement entropy is quite similar to the entanglement entropy of the two qubits \cite{Bennett:1996gf}. 
We can know that the entanglement entropy should monotonically increase by decreasing $\mbox{Tr}\rho_A^2$ and also find that the generalized concurrence of the pure state in a toric code model on a cylinder manifold should be defined by
\bea
C(n_A, \psi)\equiv\sqrt{2\bigg(1-2^{n_A-1}\mbox{Tr}\rho_A^2\bigg)}.
\eea
If $n_A=1$, a definition of the generalized concurrence of the pure state goes back to the concurrence of two qubits \cite{Bennett:1996gf}. Even if $n_A=1$, the entanglement entropy of the toric model on the cylinder manifold does not vanish and has the contribution from classical Shannon entropy because the region $A$ is non-contractible. Hence, it is interesting to obtain the generalized concurrence of the pure state between different regions in the toric code model. We also find that the factor $2^{n_A-1}$ in the concurrence of the pure state depends on boundary degrees of freedom of the Hilbert space. Thus, a relation between the maximal violation of Bell's inequality and entanglement entropy possibly be expressed in terms of the generalized concurrence $C(n_A, \psi)$ with corresponding boundary degrees of freedom of the Hilbert space in the toric code model. This also motivates us to use the definition of the generalized concurrence.

\section{Some identities for computing the R-matrix}
\label{app2}

To compute the upper bound of the maximal violation of the Bell's inequality for the 2n-qubit states in Sec. \ref{4} , we need to compute elements of the R-matrix,
which have the form,
\bea
&&\mbox{Tr}\bigg(\big((a_1\otimes a_2\otimes\cdots\otimes a_n)\otimes(a_1\otimes a_2\otimes\cdots\otimes a_n)\big)
\nn\\
&&\times\big((b_1^T\otimes b_2^T\otimes\cdots\otimes b_n^T)\otimes(b_1^T\otimes b_2^T\otimes\cdots\otimes b_n^T)\big)
\nn\\
&&\times(\sigma_{i_1}\otimes\sigma_{i_2})\otimes(\sigma_{i_3}\otimes\sigma_{i_4})\otimes\cdots \otimes
(\sigma_{i_{2n-1}}\otimes\sigma_{i_{2n}})\bigg)
\nn\\
&=&\mbox{Tr}\bigg(\big((a_1b_1^T)\otimes(a_2b_2^T)\otimes\cdots\otimes(a_nb_n^T)\big)
\otimes \big((a_1b_1^T)\otimes(a_2b_2^T)\otimes\cdots\otimes(a_nb_n^T)\big)
\nn\\
&&\times(\sigma_{i_1}\otimes\sigma_{i_2})\otimes(\sigma_{i_3}\otimes\sigma_{i_4})\otimes\cdots\otimes
(\sigma_{i_{2n-1}}\otimes\sigma_{i_{2n}})\bigg)
\nn\\
&=&\mbox{Tr}\Bigg(\bigg(\big((a_1b_1^T)\otimes(a_2b_2^T)\big)\cdot(\sigma_{i_1}\otimes\sigma_{i_2})\bigg)
\otimes
\bigg(\big((a_3b_3^T)\otimes(a_4b_4^T)\big)\cdot(\sigma_{i_3}\otimes\sigma_{i_4})\bigg)\otimes\cdots
\nn\\
&&\otimes
\bigg(\big((a_{n}b_{n}^T)\otimes(a_{n}b_{n}^T)\big)\cdot(\sigma_{i_{2n-1}}\otimes\sigma_{i_{2n}})\bigg)\Bigg)
\nn\\
&=&\mbox{Tr}\Bigg(\bigg(\big((a_1b_1^T)\otimes(a_2b_2^T)\big)\cdot(\sigma_{i_1}\otimes\sigma_{i_2})\bigg)\Bigg)
\times
\mbox{Tr}\Bigg(\bigg(\big((a_3b_3^T)\otimes(a_4b_4^T)\big)\cdot(\sigma_{i_3}\otimes\sigma_{i_4})\bigg)\Bigg)\times\cdots
\nn\\
&&\times\mbox{Tr}\Bigg(\bigg(\big((a_{n-1}b_{n-1}^T)\otimes(a_{n}b_{n}^T)\big)\cdot(\sigma_{i_{2n-1}}
\otimes\sigma_{i_{2n}})\bigg)\Bigg)
\nn\\
&=&\mbox{Tr}\Bigg((\sigma_{i_1}\otimes\sigma_{i_2})\cdot\big((a_1b_1^T)\otimes(a_2b_2^T)\big)\Bigg)
\times
\mbox{Tr}\Bigg((\sigma_{i_3}\otimes\sigma_{i_4})\cdot\big((a_3b_3^T)\otimes(a_4b_4^T)\big)\Bigg)\times\cdots
\nn\\
&&\times\mbox{Tr}\Bigg((\sigma_{i_{2n-1}}\otimes\sigma_{i_{2n}})\cdot\big((a_{n-1}b_{n-1}^T)\otimes(a_{n}b_{n}^T)
\big)\Bigg)
\nn\\
&=&\mbox{Tr}\bigg(\big(\sigma_{i_1}a_1b_1^T\big)\otimes\big(\sigma_{i_2}a_2b_2^T\big)\bigg)
\times
\mbox{Tr}\bigg(\big(\sigma_{i_3}a_3b_3^T\big)\otimes\big(\sigma_{i_4}a_4b_4^T\big)\bigg)\times\cdots
\nn\\
&&\times\mbox{Tr}\bigg(\big(\sigma_{i_{2n-1}}a_{n-1}b_{n-1}^T\big)\otimes\big(\sigma_{i_{2n}}a_{n}b_{n}^T\big)
\bigg)
\nn\\
&=&\mbox{Tr}\bigg(\sigma_{i_1}a_1b_1^T\bigg)\times\mbox{Tr}\bigg(\sigma_{i_2}a_2b_2^T\bigg)
\times\cdots\times\mbox{Tr}\bigg(\sigma_{i_n}a_nb_n^T\bigg)
\nn\\
&&\times\mbox{Tr}\bigg(\sigma_{i_{n+1}}a_1b_1^T\bigg)
\times\mbox{Tr}\bigg(\sigma_{i_{n+2}}a_2b_2^T\bigg)\times\cdots
\times
\mbox{Tr}\bigg(\sigma_{i_{2n}}a_{2n}b_{2n}^T\bigg),
\eea
in which the notation $\otimes$ represents a tensor product of matrices while $\times$ and $\cdot$ stand for a matrix multiplication, a scalar multiplication or a dyad product, depending on the context. The two-component vectors $a_i$ and $b_j$ can be the following
\bea
\begin{pmatrix}1
 \\ 0
 \end{pmatrix} \ \mbox{or}\ \begin{pmatrix}0
 \\ 1
 \end{pmatrix}.
\eea
To obtain the eigenvalues of the matrix $R^{\dagger}R$, we list the following useful identities:
\bea
&&\mbox{Tr}\Bigg\lbrack\begin{pmatrix}0&1
 \\ 1&0
 \end{pmatrix} \begin{pmatrix}1
 \\0
 \end{pmatrix} \begin{pmatrix}0&1
 \end{pmatrix} \Bigg\rbrack=1, 
\quad
 \mbox{Tr}\Bigg\lbrack\begin{pmatrix}0&1
 \\ 1&0
 \end{pmatrix} \begin{pmatrix}0
 \\1
 \end{pmatrix} \begin{pmatrix}1&0
 \end{pmatrix} \Bigg\rbrack=1,
 \nn\\
 &&\mbox{Tr}\Bigg\lbrack\begin{pmatrix}0&-i
 \\ i&0
 \end{pmatrix} \begin{pmatrix}1
 \\0
 \end{pmatrix} \begin{pmatrix}0&1
 \end{pmatrix} \Bigg\rbrack=i, 
 \quad
 \mbox{Tr}\Bigg\lbrack\begin{pmatrix}0&-i
 \\ i&0
 \end{pmatrix} \begin{pmatrix}0
 \\1
 \end{pmatrix} \begin{pmatrix}1&0
 \end{pmatrix} \Bigg\rbrack=-i,
 \nn\\
 &&\mbox{Tr}\Bigg\lbrack\begin{pmatrix}1&0
 \\ 0&-1
 \end{pmatrix} \begin{pmatrix}1
 \\0
 \end{pmatrix} \begin{pmatrix}1&0
 \end{pmatrix} \Bigg\rbrack=1, 
 \quad
  \mbox{Tr}\Bigg\lbrack\begin{pmatrix}1&0
 \\ 0&-1
 \end{pmatrix} \begin{pmatrix}0
 \\1
 \end{pmatrix} \begin{pmatrix}0&1
 \end{pmatrix} \Bigg\rbrack=-1.
 \nn\\
\eea

\baselineskip 22pt

\end{document}